\setlist[enumerate,1]{label={\roman*)}}
\newbox{\myorcidaffilbox}
\sbox{\myorcidaffilbox}{\large\includegraphics[height=1.7ex]{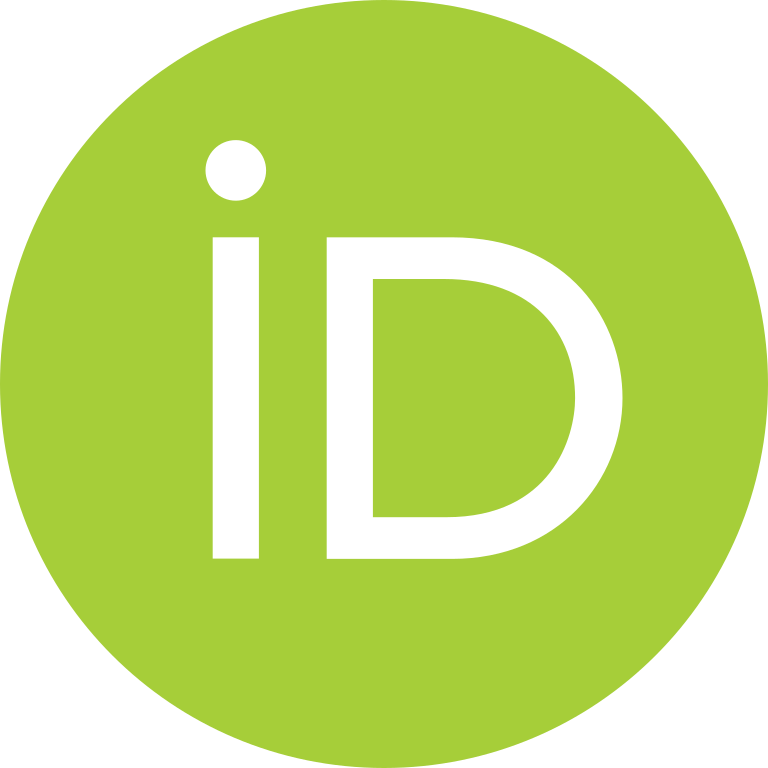}}
\newcommand{\orcid}[1]{%
  \href{https://orcid.org/#1}{\usebox{\myorcidaffilbox}}}
\theoremstyle{plain}
\newtheorem{theorem}{Theorem}
\newtheorem*{theorem*}{Theorem}
\newtheorem{lemma}[theorem]{Lemma}
\newtheorem*{lemma*}{Lemma}
\newtheorem{proposition}[theorem]{Proposition}
\newtheorem*{proposition*}{Proposition}
\newtheorem{corollary}[theorem]{Corollary}  
\newtheorem*{corollary*}{Corollary}
\theoremstyle{definition}
\newtheorem{definition}{Definition}
\newtheorem*{definition*}{Definition}
\newtheorem{example}{Example}
\newtheorem*{example*}{Example}
\theoremstyle{remark}
\newtheorem{remark}{Remark}
\newtheorem*{remark*}{Remark}
\newtheorem{conjecture}{Conjecture}
\newtheorem*{conjecture*}{Conjecture}
\newtheorem*{problem*}{Problem}
\newcommand*{\RR}{\mathbb{R}}
\newcommand*{\dd}{\mathrm{d}}
\newcommand*{\contr}[1]{\iota_{#1}}
\newcommand*{\liedv}[1]{\mathcal{L}_{#1}}
\DeclareMathOperator{\Leg}{Leg}
\newcommand{\Legp}{\mathbb{F}^{f+} L_{d}}
\newcommand{\Legm}{\mathbb{F}^{f-} L_{d}}
\newcommand{\norm}[1]{\left\lvert\left\lvert #1 \right\rvert\right\rvert}
\title{Discrete Hamilton-Jacobi theory for systems with external forces}
\author[1,2]{\orcid{0000-0002-8028-2348} Manuel de León }
\author[1]{\orcid{0000-0002-2368-5853} Manuel Lainz}
\author[1]{\orcid{0000-0002-9620-9647} Asier López-Gordón\thanks{Author to whom correspondence should be addressed: \href{mailto:asier.lopez.gordon@csic.es}{asier.lopez@icmat.es}}
}
\affil[1]{Instituto de Ciencias Matemáticas (CSIC-UAM-UC3M-UCM) \protect\\
Calle Nicolás Cabrera, 13-15, Campus Cantoblanco, UAM, 28049 Madrid, Spain}
\affil[2]{Real Academia de Ciencias Exactas, Físicas y Naturales\protect\\
Calle Valverde, 22, 28004, Madrid, Spain }
\date{\today}
\begin{document}

\maketitle

\begin{abstract}
\noindent
This paper is devoted to discrete mechanical systems subject to external forces. We introduce a discrete version of systems with Rayleigh-type forces, obtain the equations of motion and characterize the equivalence for these systems. Additionally, we obtain a Noether's theorem and other theorem characterizing the Lie subalgebra of symmetries of a forced discrete Lagrangian system. Moreover, we develop a Hamilton-Jacobi theory for forced discrete Hamiltonian systems. These results are useful for the construction of so-called variational integrators, which, as we illustrate with some examples, are remarkably superior to the usual numerical integrators such as the Runge-Kutta method.

\bigskip

\noindent \textbf{Keywords:} Variational integrators, discrete mechanics, Hamilton-Jacobi theory, Rayleigh dissipation, friction, geometric mechanics
\end{abstract}

% \tableofcontents

\tableofcontents

% \todo[inline, color=green!40, caption={}]{
%     \begin{itemize}
%       % \item Discrete symmetries
%       % \item $(L_d, \mathcal{R}_d)\cong (L_d', \mathcal{R}_d')$
%       % \item $\exists\ \mathcal{R}_d$
%       % \item Simulation 
%       % \item Introduction
%       \item Examples
%       % \item Abstract
%       \item Conclusions
%     \end{itemize}
%   }

\section{Introduction}

In previous articles we have initiated a study of mechanical systems with external forces from a geometrical point of view, with the intention of completing some gaps on the subject that do not seem to be covered in the literature. In particular, we have paid special attention to Rayleigh-type forces, which classically correspond to dissipative systems whose dissipation rate is linear with respect to velocities or, equivalently, momenta.

A modern Lagrangian formulation of a system with external forces (systems referred to there as non-conservative) can be found in Godbillon's book \cite{godbillon_geometrie_1969}. However, as far as we know, the references found in the literature concerning the discretization of these systems are reduced to the study in the paper by Marsden and West \cite{marsden_discrete_2001} (whose results on the subject we have improved in this paper), an article by Lew, Marsden, Ortiz and West \cite{lew_variational_2004}, and the studies by Sato and Martín de Diego \cite{sato_martin_de_almagro_discrete_2020,diego_variational_2018,diego_variational_2020} 
% (with a different approach to ours).
whose approach, different from ours, consists on doubling the degrees of freedom and reducing the forced system to a double-dimensional conservative system (see also Ref.~\cite{galley_principle_2014}).

The symplectic description for the Hamiltonian formulation, and the almost tangent geometry in the case of Lagrangian systems, has allowed us to address in a systematic way a series of questions almost impossible otherwise; for example, the study of infinitesimal symmetries and the corresponding conserved quantities, or the description of a Hamilton-Jacobi theory that allows us to study the complete solutions and integrability of the system.

In this paper, we study the discrete version of a mechanical system subject to an external force. Such a system is determined by a discrete Lagrangian $L_d : Q \times Q \to \RR$ defined in the product of the configuration space $Q$ by itself, and an external force which is a 1-form $f_d$ in $Q \times Q$ and therefore composed of two 1-forms in $Q$, the left $f_d^{-1}$ and the right $f_d^{+}$, which obey the discrete Lagrange-d'Alembert principle, giving the equations
\begin{equation}
  D_{2} L_{d}\left(q_{0}, q_{1}\right)+D_{1} L_{d}\left(q_{1}, q_{2}\right)+f_{d}^{+}\left(q_{0}, q_{1}\right)+f_{d}^{-}\left(q_{1}, q_{2}\right)=0. 
\end{equation}
Using the two Legendre transformations, left and right, we can define the left and right Hamiltonians, the corresponding discrete Hamilton equations, and the discrete Hamiltonian flow. We also consider the case when the external forces are of the Rayleigh type. 

This formulation has allowed us to study a number of issues such as the extension of Noether's theorem in the presence of symmetries, the notion of discrete Rayleigh potential, the relations between the continuous system and its discrete approximation, or to establish a Hamilton-Jacobi equation. One of the consequences of our work is the possibility of defining numerical integrators that behave in an excellent way as we show in several examples.

The paper is structured as follows.  In Section \ref{section_continuous} we review continuous Hamiltonian and Lagrangian mechanics in the symplectic formulation. We also recall the relation between fibred morphisms and semibasic 1-forms, as well as the Lagrange-d'Alembert principle. In Section \ref{section_discrete} we recall discrete Lagrangian mechanics with external forces, derived from the discrete Lagrange-d'Alembert principle. As a novelty, we introduce the notion of discrete Rayleigh forces. When the discrete force is of Rayleigh-type, the Euler-Lagrange equations and the Legendre transforms can be expressed in terms of two modified discrete Lagrangians $L_d^\pm$. We study the equivalence of discrete Rayleigh systems. Moreover, we present a discrete Noether's theorem, together with other theorem characterizing the subalgebra of symmetries of the discrete Lagrangian which also leaves the discrete force invariant. These two theorems extend a result previously found by Marsden and West \cite{marsden_discrete_2001}. In Section \ref{section_HJ}, we construct a Hamilton-Jacobi theory for discrete systems with external forces. We make use of the discrete flow approach proposed by de León and Sardón \cite{de_leon_geometry_2018}, extending their results for systems with external forces. We obtain a theorem relating the solutions of the Hamilton-Jacobi equations with the solutions of the Hamilton equations, extending the result by Ohsawa, Bloch and Leok \cite{ohsawa_discrete_2011} for discrete systems with external forces. Section \ref{section_conclusion} presents some conclusions and open problems.

\textbf{Notation.}
 Throughout this paper, let $Q$ be an $n$-dimensional differentiable manifold, which represents the configuration space of a dynamical system. Let $T_qQ$ and $T_q^*Q=(T_qQ)^*$ denote the tangent and cotangent spaces of $Q$ at the point $q\in Q$.
  % Its dual vector space, $T_q^*Q$, is called the cotangent space of $Q$ at $q$.
   Let $\tau_Q:TQ\to Q$ and $\pi_Q:T^*Q\to Q$ be its tangent bundle and its cotangent bundle, respectively; namely, $TQ=\cup_{q\in Q} T_qQ$ and $T^*Q=\cup_{q\in Q} T_q^*Q$, with the canonical projections $\tau_Q:(q^i, \dot q^i)\mapsto (q^i)$ and $\pi_Q:(q^i, p_i)\mapsto (q^i)$.
   Unless otherwise stated, sum over paired covariant and contravariant indices is understood.
 % \todo[inline]{Introduce $T_qQ, T_q^*Q$}

 Given a smooth manifold $M$, we denote by $\Omega^p(M)$ the real vector space (and $C^\infty$-module) of $p$-forms on $M$. We denote by $\mathfrak X(M)$ the real vector space (and $C^\infty$-module) of vector fields on $M$. For each $\alpha \in \Omega^p(M)$ and each $X\in \mathfrak{X}(M)$, $\contr{X} \alpha \in \Omega^{p-1}(M)$ denotes the interior product of $\alpha$ by $X$, and $\liedv{X} \alpha$ denotes the Lie derivative of $\alpha$ with respect to $X$.

 Given a function or map $f$ with $k$ arguments, $D_i f$ denotes the derivative of $f$ with respect to its $i$-th argument. For instance, given a discrete Lagrangian function $L_d:Q\times Q\to \RR$, $D_1 L_D(q_i, q_j)$ denotes the derivative of $L_d$ with respect to $q_i$.

 % \todo[inline, caption={}, color=green!40]{
 % \begin{itemize}
 % \item $D_i$ para las derivadas
 % \item Explicar $TQ$ y $T^*Q$
 
 % \end{itemize}}

\section{Continuous mechanics with external forces: a review}\label{section_continuous}

\subsection{Semibasic forms and fibred morphims}\label{section_morphisms}

Consider a fibre bundle $\pi: E\to M$. Let us recall \cite{leon_methods_1989,abraham_foundations_2008,godbillon_geometrie_1969} that a 1-form $\beta$ on $E$ is called \emph{semibasic} if 
\begin{equation}
    \beta(Z)=0
\end{equation}
for all vertical vector fields $Z$ on $E$. If $(x^i,y^a)$ are fibred (bundle) coordinates, then the vertical vector fields are locally generated by $\{\partial/\partial y^a\}$. So $\beta$ is a semibasic 1-form if it is locally written as
\begin{equation}
  \beta=\beta_i(x,y) \dd x^i.
\end{equation}

Let us now particularize this definition for the cases of tangent and cotangent bundles.
 % Hereinafter, let $Q$ be an $n$-dimensional differentiable manifold. Let $\tau_Q:TQ\to Q$ and $\pi_Q:T^*Q\to Q$ be its tangent bundle and its cotangent bundle, respectively.
Consider a morphism of fibre bundles, i.e., a map $D:TQ\to T^*Q$ such that the following diagram commutes:
\begin{center}
\begin{tikzcd}
 TQ \arrow[rr, "D"] \arrow[rd, "\tau_q"'] &   & T^*Q \arrow[ld, "\pi_Q"] \\
                                                          & Q &                         
\end{tikzcd}
\end{center}
One can define a corresponding semibasic 1-form \cite{godbillon_geometrie_1969,leon_methods_1989} $\beta_D$ on $TQ$ by 
\begin{equation}
    \beta_D(v_q)(u_{v_q})=\left\langle D(v_q),T\tau_Q(u_{v_q})\right\rangle,
\end{equation}
where $v_q\in T_qQ,\ u_{v_q}\in T_{v_q}(TQ)$.

Suppose that locally $D$ is given by
\begin{equation}
    D(q^i,\dot q^i)=(q^i,D_i(q,\dot q)),
\end{equation}
% where $D_i\ (i=1,\ldots, n)$ are local functions on $TQ$, 
in other words, to each vector $\dot q^i \partial/\partial q^i\in T_qQ$, it assigns the covector $D_i(q, \dot q) \dd q^i\in T^*_qQ$. 
Then,
\begin{equation}
    \beta_D=D_i(q,\dot q)\dd q^i.
\end{equation}

Conversely, given a semibasic 1-form $\beta$ on $TQ$, one can define the following morphism of fibre bundles:
% \begin{center}
% \begin{tikzcd}
% D_\beta: TQ \arrow[rr] \arrow[rd, "\tau_q"'] &   & T^*Q \arrow[ld, "\pi_Q"] \\
%                                                           & Q &                         
% \end{tikzcd},
% \end{center}
\begin{equation}
\begin{aligned}
    &D_\beta: TQ\to T^*Q,\\
    &\left\langle D_\beta(v_q),w_q\right\rangle=\beta(v_q)(u_{w_q}),
\end{aligned}
\end{equation}
for every $v_q,w_q\in T_qQ,\ u_{w_q}\in T_{w_q}(TQ)$, with $T\tau_Q(u_{w_q})=w_q$. In local coordinates, if
\begin{equation}
\beta=\beta_i(q,\dot q)\dd q^i,
\end{equation}
then
\begin{equation}
    D_\beta(q^i,\dot q^i)=\left(q^i,\beta_i(q^i,\dot{q}^i)\right).
\end{equation}
Here $(q^i, \dot{q}^i)$ are bundle coordinates in $TQ$.

So there exists a one-to-one correspondence between semibasic 1-forms and fibred morphisms from $TQ$ to $T^*Q$.

\subsection{Lagrange-d'Alembert principle}
Consider a forced Lagrangian system $(L,f_L)$ on $TQ$, with  Lagrangian function $L:TQ\to \RR$ and external force $f_L$. Let us recall that $f_L$ is a fibre-preserving map $f_L:TQ\to T^*Q$ over the identity, locally given by
\begin{equation}
  f_{L}:(q, \dot{q}) \mapsto\left(q, f_{L}(q, \dot{q})\right).
\end{equation}
Here $(q,\dot{q})=(q^1,\ldots,q^n,\dot{q}^1, \ldots, \dot{q}^n)$ for the ease of notation.
As we have explained in Subsection \ref{section_morphisms}, it can be equivalently seen as a semibasic 1-form. The dynamics of the \emph{forced Lagrangian system} $(L,f_L)$ on $TQ$ are given by the \emph{Lagrange-d'Alembert principle} \cite{marsden_discrete_2001,lew_variational_2004} (see also \cite{lanczos_variational_1986})
\begin{equation}
\delta \int_{0}^{T} L(q(t), \dot{q}(t))\ \mathrm{d} t+\int_{0}^{T} f_{L}(q(t), \dot{q}(t)) \cdot \delta q(t) \mathrm{d} t=0,
\end{equation}
where $\delta$ denotes variations vanishing at the endpoints. This principle leads to the \emph{forced Euler-Lagrange equations} 
\begin{equation}
\frac{\partial L}{\partial q}(q, \dot{q})-\frac{\mathrm{d}}{\mathrm{d} t}\left(\frac{\partial L}{\partial \dot{q}}(q, \dot{q})\right)+f_{L}(q, \dot{q})=0.
\label{forced_continuous_EL}
\end{equation}
In the absence of external forces, the Lagrange-d'Alembert principle reduces to the well-known Hamilton principle,
\begin{equation}
  \delta \int_{0}^{T} L(q(t), \dot{q}(t))\ \dd t = 0,
\end{equation}
from where the usual Euler-Lagrange equations,
\begin{equation}
  \frac{\partial L}{\partial q}(q, \dot{q})-\frac{\mathrm{d}}{\mathrm{d} t}\left(\frac{\partial L}{\partial \dot{q}}(q, \dot{q})\right)=0,
\end{equation}
can be derived.

\subsection{Hamiltonian systems with external forces}
An external force is geometrically interpreted as a semibasic 1-form on $T^*Q$. A Hamiltonian system with external forces (so called \emph{forced Hamiltonian system}) $(H,\beta)$ is given by a Hamiltonian function $H:T^*Q\to \mathbb{R}$ and a semibasic 1-form $\beta$ on $T^*Q$. Let $\omega_Q=-\dd \theta_Q$ be the canonical symplectic form of $T^*Q$.
% Introduce $\omega_Q$
% Introduce manifold?
Locally these objects can be written as
\begin{equation}
\begin{aligned}
&\theta_Q=p_i\dd q^i,\\
&\omega_Q=\dd q^i\wedge \dd p_i,\\
&\beta=\beta_i(q,p) \dd q^i,\\
&H=H(q,p),
\end{aligned}
\end{equation}
% where $(q,p)=(q^1,\ldots,q^n,p_1,\ldots,p_n)$.
where $(q^i,p_i)$ are bundle coordinates in $T^*Q$. At each point $(q,p)\in T^*Q$, $\beta_i(q,p)\dd q^i\in T^*_qQ$ is a covector.

The dynamics of the system is given by the vector field $X_{H,\beta}$, defined by
\begin{equation}
\contr{X_{H,\beta}}\omega_Q=\dd H+\beta. \label{Hamiltonian_dynamics_eq}
\end{equation}
If $X_H$ is the Hamiltonian vector field for $H$, that is,
\begin{equation}
\contr{X_H}\omega_Q=\dd H, \label{Hamiltonian_VF}
\end{equation}
and $Z_\beta$ is the vector field defined by
\begin{equation}
\contr{Z_\beta}\omega_Q=\beta,
\end{equation}
then we have
\begin{equation}
X_{H,\beta}=X_H+Z_\beta.
\end{equation}
Locally, the above equations can be written as 
\begin{equation}
\begin{aligned}
&X_H=\frac{\partial  H} {\partial p_i }\frac{\partial  } {\partial  q^i}
-\frac{\partial H } {\partial  q^i} \frac{\partial  } {\partial  p_i}
 \label{Hamiltonian_VF_local},
\\
&\beta=\beta_i \dd q^i,\\
&Z_\beta=-\beta_i \frac{\partial  } {\partial p_i} ,\\
&X_{H,\beta}=\frac{\partial  H} {\partial p_i }\frac{\partial  } {\partial  q^i}-\left(\frac{\partial H } {\partial  q^i}+\beta_i\right)
\frac{\partial  } {\partial  p_i}.
\end{aligned}
\end{equation}
Then, a curve $(q^i(t), p_i (t))$ in $T^*Q$ is an integral curve of $X_{H,\beta}$ if and only if it satisfies the forced motion equations
\begin{equation}
\begin{aligned}
&\frac{\mathrm{d}q^i} {\mathrm{d}t}=\frac{\partial  H} {\partial p_i },\\
&\frac{\mathrm{d}p_i} {\mathrm{d}t}=-\left(\frac{\partial  H} {\partial q^i }+\beta_i\right).
\end{aligned}
\end{equation}
% Let us recall that the \emph{Poisson bracket} is the bilinear operation
% \begin{equation}\\
% \begin{aligned}
% \left\{\cdot,\cdot  \right\}: C^\infty(M,\RR)\times C^\infty(M,\RR) &\to C^\infty(M,\RR)\\
% \left\{f,g  \right\} &=\omega(X_f,X_g),
% \end{aligned}
% \end{equation}
% with $X_f, X_g$ the Hamiltonian vector fields associated to $f$ and $g$, respectively.

\subsection{Lagrangian systems with external forces}
% \todo[inline, color=red!40]{Revise sign criteria! $f_L=-\alpha= - \dd_S \mathcal{R}$ but $f_d^+ = D_2 R_d$}
Given an external force $f_L:TQ\to T^*Q$, the associated 1-form $\alpha$ on $TQ$ is given by
% \footnote{The minus sign has been included to compensate the different sign criteria from Refs.~\cite{de_leon_symmetries_2021,de_leon_geometric_2022,lopez-gordon_geometry_2021,leon_methods_1989} and \cite{marsden_discrete_2001,lew_variational_2004}.}
\begin{equation}
  \alpha = -(f_{L})_i (q,\dot{q})\dd q^i.
\end{equation}
\begin{remark}
It is worth noting that we have changed the sign criteria for discrete forces with respect to our previous papers
% \footnote{For the sake of anonymity we have not cited them in this version of the manuscript.},
 \cite{de_leon_symmetries_2021,lopez-gordon_geometry_2021,de_leon_geometric_2022},
 in order to be consistent with Lew, Marsden, Ortiz and West's criteria \cite{marsden_discrete_2001,lew_variational_2004}.
\end{remark}
The \emph{Poincaré-Cartan 1-form} on $TQ$ associated with the Lagrangian function $L:TQ\to \RR$ is
\begin{equation}
  \theta_L=S^*(\dd L),
\end{equation}
where $S^*$ is the adjoint operator of the vertical endomorphism on $TQ$, which is locally
\begin{equation}
  S = \dd q^i \otimes \frac{\partial  } {\partial  \dot{q}^i}.
\end{equation}
The \emph{Poincaré-Cartan 2-form} is $\omega_L=-\dd \theta_L$, so locally
\begin{equation}
  \omega_L = \dd q^i \wedge \dd \left( \frac{\partial L} {\partial  \dot{q}^i}  \right).
\end{equation}
One can easily verify that $\omega_L$ is symplectic if and only if $L$ is regular, that is, if the Hessian matrix
\begin{equation}
  \left( W_{ij}  \right) = \left( \frac{\partial^2 L  } {\partial \dot{q}^i \partial \dot{q}^j }  \right)
\end{equation}
is invertible. 
The \emph{energy} of the system is given by
\begin{equation}
  E_L=\Delta(L)-L,
\end{equation}
where $\Delta$ is the Liouville vector field,
\begin{equation}
  \Delta= \dot{q}^i \frac{\partial  } {\partial  \dot{q}^i}.
\end{equation}
The dynamics of the system $(L, \alpha)$ is given by
\begin{equation}
  \contr{\xi_{L,\alpha}} \omega_L = \dd E_L + \alpha, \label{dynamics_Lagrangian}
\end{equation}
that is, the integral curves of the \emph{forced Euler-Lagrange vector field} $\xi_{L,\alpha}$ satisfy the forced Euler-Lagrange equations \eqref{forced_continuous_EL}. This vector field is a \emph{second order differential equation} (\emph{SODE}), that is, 
\begin{equation}
  S(\xi_{L,\alpha}) = \Delta.
\end{equation}
An external force $\bar{R}$ is \emph{Rayleigh} (see Refs.~\cite{de_leon_symmetries_2021,lopez-gordon_geometry_2021}, see also Ref.~\cite{bloch_euler-poincare_1996}) if there exists a function $\mathcal{R}$ on $TQ$ such that
\begin{equation}
  \bar{R}=S^*(\dd \mathcal{R}),
\end{equation}
which can be locally written as
\begin{equation}
  \bar{R} = \frac{\partial \mathcal{R}} {\partial \dot{q}^i} \dd q^i.
\end{equation}
This function $\mathcal{R}$ is called the \emph{Rayleigh dissipation function} (or the \emph{Rayleigh potential}). A \emph{Rayleigh system} $(L,\mathcal{R})$ is a forced Lagrangian system with Lagrangian function $L$, and with external force generated by the Rayleigh potential $\mathcal{R}$. The associated fibred morphism is given by minus the fibre derivative of $\mathcal R$, namely, $f_L\colon (q^i, \dot q^i)\mapsto (q^i, -\partial \mathcal R/\partial \dot q^i)$.

The Legendre transformation is the fibre derivative of $L$. That is, it is a mapping $\Leg : T Q \to T^* Q$ which is locally given by
\begin{equation}
  \Leg: (q^i,\dot{q}^i) \mapsto (q^i,p_i),
\end{equation}
where $p_i= \partial L/\partial \dot q^i$.
It is the fibered morphism associated to the $1$-form $\theta_L$. Hence, the following diagram commutes.

\begin{center}
\begin{tikzcd}
TQ \arrow[rr, "\Leg"] \arrow[rd, "\tau_q"'] &   & T^*Q \arrow[ld, "\pi_Q"] \\
                                                          & Q &                      
\end{tikzcd}
\end{center}

% The \emph{Legendre transformation} is a mapping $\Leg: TQ\to T^*Q$ such that the diagram

% \begin{center}
% \begin{tikzcd}
% TQ \arrow[rr, "\Leg"] \arrow[rd, "\tau_q"'] &   & T^*Q \arrow[ld, "\pi_Q"] \\
%                                                           & Q &                      
% \end{tikzcd}
% \end{center}
% commutes. Here $\tau_q$ and $\pi_Q$ are the canonical projections on $Q$.
% Locally,
% \begin{equation}
%   \Leg: (q^i,\dot{q}^i) \mapsto (q^i,p_i),
% \end{equation}
% with $p_i=\partial L/\partial \dot{q}^i$. In other words, $\Leg$ is the fibred morphism associated with the semibasic 1-form $\theta_L$. 
In what follows, let us assume that the Lagrangian $L$ is \emph{hyperregular}, i.e., that $\Leg$ is a (global) diffeomorphism.

\section{Discrete mechanics with external forces}\label{section_discrete}
\subsection{Forced discrete Lagrange-D'Alembert equations}

% \todo[inline]{Forced discrete vs forced discrete}
In the discrete framework, the tangent space $TQ$  and the Lagrangian $L$ are substituted by their discrete counterparts: $Q \times Q$ and $L_{d}: Q \times Q \rightarrow \mathbb{R}$, respectively. 
% Given a smooth manifold $M$, we denote by $\Omega^1(M)$ the real vector space (and $C^\infty$-module) of 1-forms on $M$.
% \todo{\tiny Explicar por qué $\Omega^1(Q \times Q)$ }
We can identify $\Omega^1(Q\times Q)$ with $\Omega^1(Q) \oplus \Omega^1(Q)$, and thus the discrete version of the force $f:TQ\to T^*Q$ can be taken to be a one-form $f_d = (f_d^+, f_d^-)  \in \Omega^1(Q \times Q)$. Instead of curves $q:[0,T]\to Q$, we will consider their discrete versions, which are the $n$-tuples $\left\{q_k  \right\}_{k=0}^N \in Q^{N+1}$. In other words, a curve in $Q$ is understood now as a sequence of $N+1$ of its points (see Figure \ref{fig_discrete_continuous}). As we will see, there is a correspondence between these objects and their continuous counterparts. 
% \todo[inline]{Dibujo}

\begin{figure}[h]
\centering
\includegraphics[width=.5\linewidth]{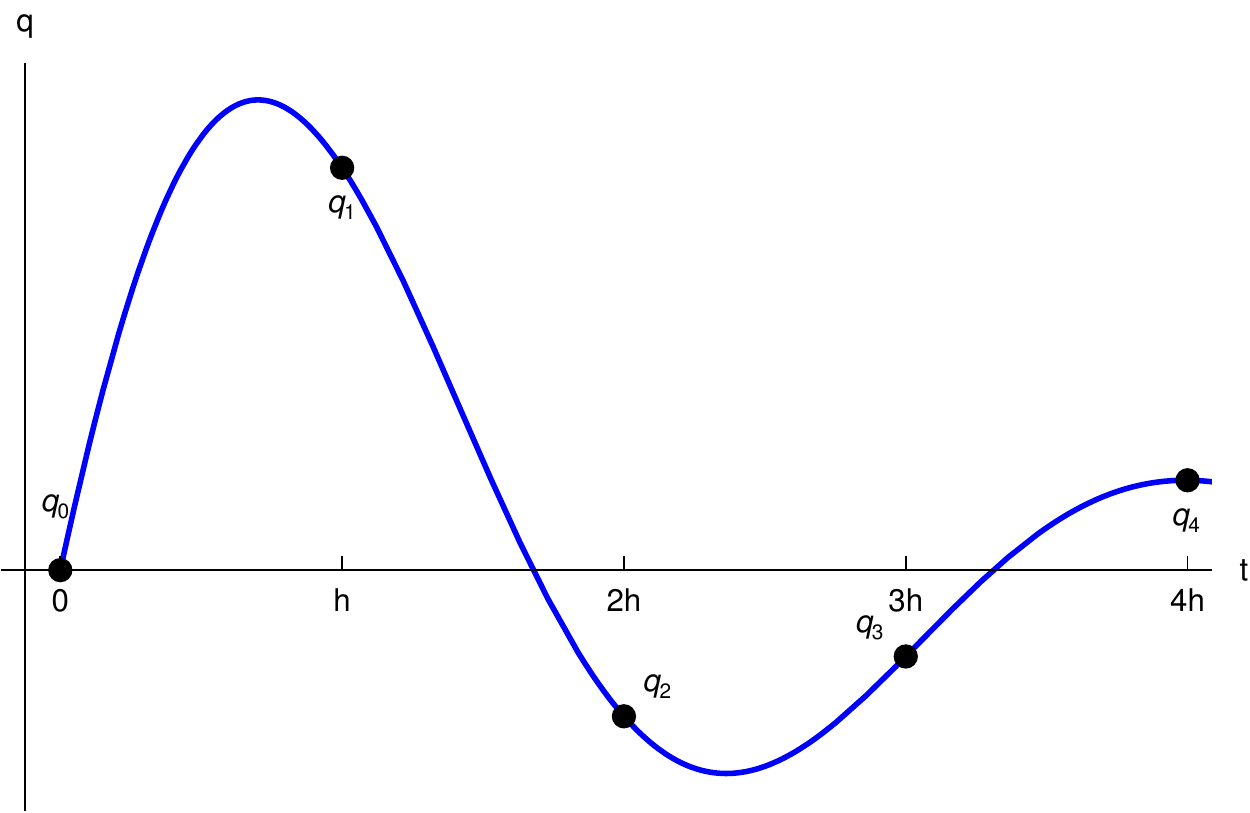}
\caption{Continuous and discrete trajectories of a non-linear simple pendulum with a linear Rayleigh dissipation.}
\label{fig_discrete_continuous}
\end{figure}

The \emph{discrete Lagrange-d'Alembert principle} seeks discrete curves $\left\{q_k  \right\}_{k=0}^N$ that satisfy
\begin{equation}
\delta \sum_{k=0}^{N-1} L_{d}\left(q_{k}, q_{k+1}\right)+\sum_{k=0}^{N-1}\left[f_{d}^{-}\left(q_{k}, q_{k+1}\right) \cdot \delta q_{k}+f_{d}^{+}\left(q_{k}, q_{k+1}\right) \cdot \delta q_{k+1}\right]=0
\end{equation}
for all variations $\left\{\delta q_{k}\right\}_{k=0}^{N} \in \prod_{i=0}^N{T_{q_i} Q}$ vanishing at the endpoints.

This principle provides the \emph{forced discrete Euler-Lagrange equations}
\begin{equation}
D_{2} L_{d}\left(q_{k-1}, q_{k}\right)+D_{1} L_{d}\left(q_{k}, q_{k+1}\right)+f_{d}^{+}\left(q_{k-1}, q_{k}\right)+f_{d}^{-}\left(q_{k}, q_{k+1}\right)=0. \label{discrete_forced_EL}
\end{equation}
The right and left \emph{forced discrete Legendre transforms} are given by 
\begin{equation}
\begin{aligned}
\Legp: Q\times Q &\to T^*Q\\
\left(q_j, q_{j+1}\right) &\mapsto \left(q_{j+1}, D_{2} L_{d}\left(q_j, q_{j+1}\right)+f_{d}^{+}\left(q_j, q_{j+1}\right)\right),
\end{aligned}
\end{equation}
and
\begin{equation}
\begin{aligned}
\Legm: Q\times Q &\to T^*Q\\
\left(q_j, q_{j+1}\right) &\mapsto \left(q_{j}, -D_{1} L_{d}\left(q_j, q_{j+1}\right)-f_{d}^{-}\left(q_j, q_{j+1}\right)\right),
\end{aligned}
\end{equation}
respectively. The corresponding momenta are
\begin{equation}
  p_{j,j+1}^+=D_{2} L_{d}\left(q_j, q_{j+1}\right)+f_{d}^{+}\left(q_j, q_{j+1}\right), \label{momenta_plus}
\end{equation}
and
\begin{equation}
  p_{j,j+1}^- = -D_{1} L_{d}\left(q_j, q_{j+1}\right)-f_{d}^{-}\left(q_j, q_{j+1}\right). \label{momenta_minus}
\end{equation}
By the forced discrete Euler-Lagrange equations \eqref{discrete_forced_EL},
\begin{equation}
  p_{j-1,j}^+ = p_{j,j+1}^- \eqqcolon p_j.
\end{equation}
One can introduce the \emph{discrete Hamiltonian flow}
\begin{equation}
\begin{aligned}
\mathcal{F}_{d}^{H}=\Legp \circ\left(\Legm  \right)^{-1}:T^*Q&\to T^*Q\\
(q_j,p_j) & \mapsto (q_{j+1},p_{j+1}).
\end{aligned}
\end{equation}
This is formally identical to the flow defined in Ref.~\cite{de_leon_geometry_2018}, albeit with the discrete Legendre transforms replaced by the forced discrete Legendre transforms from Ref.~\cite{marsden_discrete_2001}. Similarly, we define the \emph{discrete Lagrangian flow} by
\begin{equation}
\begin{aligned}
  \mathcal{F}_{L_d}^{f_d} =  \left(\Legm  \right)^{-1} \circ \Legp : Q\times Q & \to Q\times Q\\
  (q_ {j-1}, q_j) & \mapsto (q_j, q_{j+1}).  
\end{aligned}
\end{equation}

\subsection{Forced discrete Hamilton equations}
% \todo[inline]{Do they have an intrinsic definition?}

 Let us define the \emph{right discrete Hamiltonian} $H_d^+\colon T^*Q\to \RR$ by
\begin{equation}
H_{d}^{+}\left(q_{j}, p_{j+1}\right)=p_{j+1} q_{j+1}-L_{d}\left(q_{j}, q_{j+1}\right),
\end{equation}
% from where the \emph{forced right discrete Hamilton equations} 
% \begin{equation}
% \begin{aligned}
% &\left[  q_{j+1} -D_{2} H_{d}^{+}\left(q_{j}, p_{j+1}\right)  \right] 
% \frac{\partial p_{j+1}} {\partial q_{j+1} }
%    =       -f_d^+(q_j,q_{j+1}) , \\
% &p_{j} =D_{1} H_{d}^{+}\left(q_{j}, p_{j+1}\right) 
%        - f_d^-\left(q_{j}, p_{j+1}\right), \label{right_discrete_Hamilton}
% \end{aligned}
% \end{equation}
% can be easily derived.
 % Analogously, we have 
 and the \emph{left discrete Hamiltonian} $H_d^-\colon T^*Q\to \RR$ by
\begin{equation}
H_{d}^{-}(q_{j+1},p_{j})=-p_{j} q_{j}-L_{d}\left(q_{j}, q_{j+1}\right).
\end{equation}
% and the \emph{forced left discrete Hamilton equations} 
% \begin{equation}
% \begin{aligned}
% &q_{j} =-D_{2} H_{d}^{-}\left(q_{j+1}, p_{j}\right), \\
% &p_{j+1} =-D_{1} H_{d}^{-}\left(q_{j+1}, p_{j}\right)
%           +f_d^+\left(q_{j+1}, p_{j}\right) .
% \end{aligned}
% \end{equation}
The \emph{discrete action} is 
\begin{equation}
S_{d}^{N}\left(\left\{q_{j}\right\}_{j=0}^{N-1}\right)=\sum_{j=0}^{N-1} L_{d}\left(q_{j}, q_{j+1}\right).
\end{equation}
In terms of the right discrete Hamiltonian, it can be written as
\begin{equation}
  S_{d}^{N}\left(\left\{q_{j}\right\}_{j=0}^N\right)
  = \sum_{j=0}^{N-1} \left[ 
      p_{j+1} q_{j+1}-H_{d}^{+}\left(q_{j}, p_{j+1}\right)
    \right], \label{discrete_action_Hamiltonian}
\end{equation}
so the discrete Lagrange-d'Alembert principle can be rewritten as
\begin{equation}
\delta \sum_{j=0}^{N-1}
\left[ p_{j+1}q_{j+1}-H_d^+(q_j,p_{j+1})  \right]
+\sum_{j=0}^{N-1}\left[f_{d}^{-}\left(q_{j}, q_{j+1}\right) \cdot \delta q_{j}+f_{d}^{+}\left(q_{j}, q_{j+1}\right) \cdot  \delta q_{j+1}\right]=0,
\end{equation}
from where the \emph{forced right discrete Hamilton equations} 
\begin{subequations}
\begin{flalign}
&\left[  q_{j+1} -D_{2} H_{d}^{+}\left(q_{j}, p_{j+1}\right)  \right] 
\frac{\partial p_{j+1}} {\partial q_{j+1} }
   =       -f_d^+(q_j,q_{j+1}), 
   \label{right_discrete_Hamilton_a} \\
&p_{j} =D_{1} H_{d}^{+}\left(q_{j}, p_{j+1}\right) 
       - f_d^-\left(q_{j}, q_{j+1}\right), 
       \label{right_discrete_Hamilton_b}
\end{flalign} \label{right_discrete_Hamilton}
\end{subequations}
can be easily derived. Similarly, we can obtain the \emph{forced left discrete Hamilton equations}
\begin{subequations}
\begin{flalign} 
  &\left[  q_{j} +D_{2} H_{d}^{-}\left(q_{j+1}, p_{j}\right)  \right] 
  \frac{\partial p_{j}} {\partial q_{j} }
   =       f_d^-(q_j,q_{j+1}) , \label{left_discrete_Hamilton_a}\\
  &p_{j+1} =-D_{1} H_{d}^{-}\left(q_{j+1}, p_{j}\right) 
       + f_d^+\left(q_{j}, q_{j+1}\right). 
       \label{left_discrete_Hamilton_b}
\end{flalign} \label{left_discrete_Hamilton}
\end{subequations}

As we shall explain in Section \ref{section_HJ}, the solutions of the right (resp.~left) discrete Hamilton equations are related with the solutions of the so-called right (resp.~left) discrete Hamilton-Jacobi equations via a discrete flow on $Q\times Q$.

\subsection{Discrete Rayleigh forces}
% We say that a discrete force $f_d=(f_d^-,f_d^+)$ is \emph{Rayleigh} if $\dd f_d = 0$, that is, locally $f_d = \dd \mathcal{R}_d$ for some $\mathcal{R}_d : Q \times Q \to \mathbb{R}$, which we call the \emph{discrete Rayleigh potential}. (or is it $f_d = (- D_0 f^- , D_1 f^+))$.
% \todo[inline,color=red!50]{¡Ojo! He cambiado el criterio de signos para ser consistentes con el de Marsden y West (esto se traduce en intercambiar $L_d^+\leftrightarrow L_d^-$).}
\begin{definition}
We say that a discrete force $f_d=(f_d^-,f_d^+)$ is \emph{Rayleigh} if there exists a function $\mathcal{R}_d : Q \times Q \to \mathbb{R}$ such that
\begin{equation}
  f_d^+(q_0,q_1)= -D_2\mathcal{R}_d(q_0,q_1),
\end{equation}
and
\begin{equation}
  f_d^-(q_0,q_1)=D_1\mathcal{R}_d(q_0,q_1)
\end{equation}
for each $(q_0,q_1)\in Q\times Q$.
We shall call this function $\mathcal{R}_d$ the \emph{discrete Rayleigh potential}.
Let us also introduce the \emph{modified discrete Lagrangians} $L_d^+=L_d+\mathcal{R}_d$ and $L_d^-=L_d-\mathcal{R}_d$. 
\end{definition}

The forced discrete Euler-Lagrange equations can be written in terms of the modified discrete Lagrangians as
\begin{equation}
D_{2}  L_{d}^-\left(q_{k-1}, q_{k}\right)+D_{1}  L_{d}^+\left(q_{k}, q_{k+1}\right)=0.
 \label{discrete_forced_EL_Rayleigh}
\end{equation}
The \emph{forced discrete Legendre transforms} can be written as
\begin{equation}
\begin{aligned}
  &\Legp: \left(q_j, q_{j+1}\right) &\mapsto \left(q_{j+1}, D_{2}  L_{d}^-\left(q_j, q_{j+1}\right)\right),\\
  &\Legm: \left(q_j, q_{j+1}\right) &\mapsto \left(q_{j}, -D_{1} L_{d}^+\left(q_j, q_{j+1}\right)\right),
\end{aligned}
\end{equation}
so their corresponding momenta are
\begin{equation}
\begin{aligned}
  &p_{j,j+1}^+=D_{2}  L_{d}^-\left(q_j, q_{j+1}\right),\\
  &p_{j,j+1}^- = -D_{1}  L_{d}^+\left(q_j, q_{j+1}\right).
\end{aligned}
\end{equation}
% \todo[inline, color=blue!40]{Sé que la notación aquí puede resultar confusa pero, tal y como hemos definido ahora $\mathcal R_d$, $p_{j,j+1}^\pm$ va con $L_d^\mp$ (cf. Eqs.~\eqref{momenta_plus} y \eqref{momenta_minus}). Podríamos tomar otro convenio de signos para $f_d$ y $\mathcal{R}_d$, pero creo que es mejor adaptarnos al de Marsden-West.}
The forced right discrete Hamilton equations are now
\begin{equation}
\begin{aligned}
&\left[  q_{j+1} -D_{2} H_{d}^{+}\left(q_{j}, p_{j+1}\right)  \right] 
D_2^2 L_d^-(q_j,q_{j+1})
   =       -D_2\mathcal{R}_d(q_j,q_{j+1}) , \\
&p_{j} =D_{1} H_{d}^{+}\left(q_{j}, p_{j+1}\right) 
       + D_1\mathcal{R}_d(q_j,q_{j+1}), 
\end{aligned}
\end{equation}
and the forced left discrete Hamilton equations are
\begin{equation}
\begin{aligned} 
  &\left[  q_{j} +D_{2} H_{d}^{-}\left(q_{j+1}, p_{j}\right)  \right] 
  D_1^2 L_d^+ (q_j, q_{j+1})
   =  D_1 {R}_d (q_j, q_{j+1}) , \\
  &p_{j+1} =-D_{1} H_{d}^{-}\left(q_{j+1}, p_{j}\right) 
       + D_2 {R}_d \left(q_{j}, q_{j+1}\right). \label{left_discrete_Hamilton_Rayleigh}
\end{aligned}
\end{equation}
Here $D_i^2=D_i\circ D_i$, for $i=1,2$.

\begin{proposition}[Equivalent discrete Rayleigh systems]
Consider two discrete Rayleigh systems $(L_d, \mathcal{R_d})$ and $(\tilde{L}_d, \tilde{\mathcal{R}}_d)$, with $\tilde{L}_d= L_d + \phi$ and $\tilde{\mathcal{R}}_d= \mathcal{R}_d + \chi$ for some functions $\phi$ and $\chi$ on $Q\times Q$. Then, $(L_d, \mathcal{R_d})$ and $(\tilde{L}_d, \tilde{\mathcal{R}}_d)$ are \emph{equivalent} (i.e., they lead to the same forced discrete Euler-Lagrange equations \eqref{discrete_forced_EL_Rayleigh}) if and only if 
\begin{subequations}
\begin{equation}
  \tilde{L}_d(q_0, q_1) 
  = L_d (q_0, q_1) + \psi(q_0) + \varphi_1(q_1) 
  + \varphi_0(q_0) - \psi(q_1)  ,
\end{equation}
and
\begin{equation}
  \tilde{\mathcal{R}}_d(q_0, q_1) 
  = \mathcal{R}_d (q_0, q_1)  + \psi(q_0) - \varphi_1(q_1) 
  - \varphi_0(q_0) + \psi(q_1)  ,
\end{equation}
\end{subequations}
for some functions $\psi, \varphi_0, \varphi_1$ on $Q$.
% and some constants $c, d$. 

In other words, $(L_d^+, L_d^-)$ and $(\tilde L_d^+, \tilde L_d^-)$ are equivalent if and only if
\begin{subequations}
\begin{equation}
  \tilde L_d^+(q_0, q_1) = L_d^+(q_0, q_1) + 2\psi (q_0) +2\varphi_1 (q_1),
\end{equation}
and
\begin{equation}
  \tilde L_d ^-(q_0, q_1) = L_d^-(q_0, q_1) + 2\varphi_0(q_0) - 2\psi(q_1) .
\end{equation}
\end{subequations}

\end{proposition}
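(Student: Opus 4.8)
The plan is to turn equivalence into a separation-of-variables problem. I would set $A \coloneqq \phi - \chi$ and $B \coloneqq \phi + \chi$, viewed as functions on $Q\times Q$; since $L_d^{\pm} = L_d \pm \mathcal{R}_d$, one has $\tilde L_d^{\pm} = L_d^{\pm} + (\phi\pm\chi)$, i.e.\ $\tilde L_d^{-} = L_d^{-} + A$ and $\tilde L_d^{+} = L_d^{+} + B$. Hence, by the Rayleigh form \eqref{discrete_forced_EL_Rayleigh} of the forced discrete Euler--Lagrange equations, the systems $(L_d,\mathcal{R}_d)$ and $(\tilde L_d,\tilde{\mathcal{R}}_d)$ are equivalent if and only if
\begin{equation*}
  D_2 A(q_0,q_1) + D_1 B(q_1,q_2) = 0 \qquad \text{for all } q_0,q_1,q_2\in Q.
\end{equation*}
I would also record at the outset that the reformulation in terms of $(L_d^{+},L_d^{-})$ is just a rewriting of the first one, since $\tilde L_d^{\pm} - L_d^{\pm} = \phi\pm\chi$ and $\phi = \tfrac12(A+B)$, $\chi = \tfrac12(B-A)$; so it suffices to prove the characterization of $\tilde L_d$ and $\tilde{\mathcal{R}}_d$.

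For the \enquote{if} direction I would simply substitute the proposed $\tilde L_d$ and $\tilde{\mathcal{R}}_d$ into the identity above: each of $A$ and $B$ is then the sum of a function of $q_0$ and a function of $q_1$, with the $q_1$-dependent part of $A$ equal to minus the $q_0$-dependent part of $B$. Consequently $D_2 A(q_0,q_1)$ and $D_1 B(q_1,q_2)$ depend on $q_1$ only and cancel, so the identity holds and the two Rayleigh systems share the same dynamics.

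The substance is the converse, which I would handle by separation of variables. Holding the middle point $q_1$ fixed and letting $q_0$ and $q_2$ vary independently forces $D_2 A(q_0,q_1)$ to be independent of $q_0$ and $D_1 B(q_1,q_2)$ to be independent of $q_2$; hence there is a $1$-form $\sigma$ on $Q$ with $D_2 A(q_0,q_1) = \sigma(q_1)$ and $D_1 B(q_0,q_1) = -\sigma(q_0)$ for all $q_0,q_1\in Q$. Since $\sigma = D_2 A(\bar q,\cdot)$ for any fixed $\bar q\in Q$, the form $\sigma$ is exact; I would write $\sigma = -2\,\dd\psi$ for a function $\psi$ on $Q$. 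Then $D_2\bigl(A(q_0,q_1)+2\psi(q_1)\bigr) = D_2 A(q_0,q_1) - \sigma(q_1) = 0$, so $A(q_0,q_1)+2\psi(q_1)$ is independent of $q_1$ and therefore equals a function $2\varphi_0(q_0)$ of $q_0$ alone; symmetrically $D_1\bigl(B(q_0,q_1)-2\psi(q_0)\bigr)=0$, so $B(q_0,q_1)-2\psi(q_0)$ equals a function $2\varphi_1(q_1)$ of $q_1$ alone. Substituting $A = 2\varphi_0(q_0) - 2\psi(q_1)$ and $B = 2\psi(q_0) + 2\varphi_1(q_1)$ into $\phi = \tfrac12(A+B)$ and $\chi = \tfrac12(B-A)$ then yields the stated expressions for $\tilde L_d$ and $\tilde{\mathcal{R}}_d$, and the $(L_d^{\pm})$ form follows from $\tilde L_d^{\pm} - L_d^{\pm} = \phi\pm\chi$.

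I expect the only nontrivial point to be the bookkeeping in that last step---making precise that the \enquote{integration constants} in one variable are genuine functions of the other, and observing that $\sigma$ is automatically exact because it arises as a partial differential of a globally defined function, so no regularity or potential-theoretic hypothesis is needed. Everything else is elementary calculus.
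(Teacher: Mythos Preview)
Your proposal is correct and follows essentially the same route as the paper: your $A,B$ are the paper's $f^{-},f^{+}$, and both proofs reduce equivalence to the identity $D_2 f^{-}(q_{k-1},q_k)+D_1 f^{+}(q_k,q_{k+1})=0$ and then separate variables to obtain the single-variable decompositions. The only cosmetic difference is that the paper differentiates once more to get $D_1D_2 f^{\pm}=0$ and integrates in coordinates, whereas you phrase the same step as independence from $q_0$ (resp.\ $q_2$) and record that the resulting $1$-form $\sigma$ is automatically exact; this is a nice touch but not a genuinely different argument.
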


\begin{proof}
Let $L_d^\pm = L_d \pm \mathcal R_d$.
Let $\tilde{L}_d^\pm = L_d^\pm + f^\pm$ for some functions $f^+, f^-$ on $Q\times Q$. Clearly, $(L_d^-, L_d^+)$ and $(\tilde{L}_d^-, \tilde{L}_d^+)$ lead to the same forced discrete Euler-Lagrange equations \eqref{discrete_forced_EL_Rayleigh} if and only if
\begin{equation}
  D_2 f^- (q_{k-1}, q_k) + D_1 f^+(q_k, q_{k+1}) = 0,  
  \label{condition_equivalent_Rayleigh_discrete}
\end{equation}
which implies that
\begin{equation}
  D_1 D_2 f^-(q_{k-1}, q_{k}) = D_1 D_2 f^+ (q_k, q_{k+1}) = 0,
\end{equation}
and thus
\begin{equation}
  f^\pm (q_0, q_{1}) = \varphi_0^\pm (q_0) + \varphi^\pm_1 (q_1) 
  % + c^\pm
\end{equation}
for some functions $\varphi_0^\pm, \varphi_1^\pm$ on $Q$.
% and some constants $c^\pm$.
 Now, by Eq.~\eqref{condition_equivalent_Rayleigh_discrete}, we have that
\begin{equation}
 \left(\varphi^-_1 \right)'(q_k) + \left(\varphi^+_0  \right)'(q_k) = 0,
\end{equation}
so
\begin{equation}
  \varphi^+_0 (q_k) = -\varphi^-_1(q_k) + b,
  % \eqqcolon \psi(q_k),
\end{equation}
for some constant $b$. Let us denote $\psi=1/2\ \varphi^+_0$, $\varphi_0= 1/2\ \varphi_0^-$ and $\varphi_1=1/2\ \varphi_1^+$.
% let $c=1/2\ c^+$ and $d=1/2\ (c^-+b)$. 
Then, we have that
\begin{equation}
  f^+(q_0, q_1) = 2\psi(q_0) + 2\varphi_1(q_1),
   % +  2c,
\end{equation}
and
\begin{equation}
  f^-(q_0, q_1) =  2\varphi_0 (q_0) - 2\psi(q_1) 
  +b.
  % + 2d.
\end{equation}
The constant $b$ can be absorbed in $\varphi_0$.
Therefore $(L_d^-, L_d^+)$ and $(\tilde{L}_d^-, \tilde{L}_d^+)$ lead to the same forced discrete Euler-Lagrange equations if and only if
\begin{subequations}
\begin{equation}
  \tilde L_d^+(q_0, q_1) = L_d^+(q_0, q_1) + 2\psi (q_0) +2\varphi_1 (q_1) ,
  % + 2c,
\end{equation}
and
\begin{equation}
  \tilde L_d ^-(q_0, q_1) = L_d^-(q_0, q_1) + 2\varphi_0(q_0) - 2\psi(q_1) ,
  % + 2d,
  % +b,
\end{equation}
\end{subequations}
for some functions $\psi, \varphi_0, \varphi_1$ on $Q$ 
% and some constants $c, d$.
and some constant $b$.
 Obviously, $\tilde{L}_d^\pm$ are the modified Lagrangians associated with $(\tilde L_d, \mathcal{R}_d)$ if and only if
\begin{equation}
  \tilde L_d = \frac{1}{2} \left(\tilde L_d^+ + \tilde L_d^-  \right),
\end{equation}
and
\begin{equation}
  \tilde{\mathcal{R}}_d = \frac{1}{2} \left(\tilde L_d^+ - \tilde L_d^-  \right),
\end{equation}
from where the result follows.
% In terms of the discrete Lagrangian and the discrete Rayleigh potential this equivalence can be expressed as follows.
\end{proof}

\subsection{Exact Lagrangians}
The discrete Lagrangian is an approximation to the integral
\begin{equation}
L_{d}^{\mathrm{ex}}\left(q_{0}, q_{1}, h\right)=\int_{0}^{h} L(q(t), \dot{q}(t))\  \dd t,
\end{equation}
called the \emph{exact discrete Lagrangian}. Here $q:\left[t_{0}, t_{1}\right] \rightarrow Q$ is the solution of the forced Euler-Lagrange equations \eqref{forced_continuous_EL} with boundary conditions $q\left(0\right)=q_{0}, q\left(h\right)=q_{1}$, and $h\in \RR$ is a fixed time step. Similarly, the \emph{discrete forces} \cite{marsden_discrete_2001} are fibre preserving maps $f_d^+, f_d^-:Q\times Q\to T^*Q$, in the sense that $\pi_Q\circ f_d^-=\pi_1$ and $\pi_Q\circ f_d^+=\pi_2$, where $\pi_1,\pi_2:Q\times Q\to Q$ are the natural projections given by $\pi_1(q_i,q_j)=q_i$ and $\pi_2(q_i,q_j)=q_j$. 
The discrete forces are approximations to the \emph{exact discrete forces}: 
\begin{equation}
\begin{aligned}
&f_{d}^{\mathrm{ex}+}\left(q_{0}, q_{1}, h\right)=\int_{0}^{h} f_{L}(q(t), \dot{q}(t)) \cdot \frac{\partial q(t)}{\partial q_{1}} \mathrm{~d} t, \\
&f_{d}^{\mathrm{ex}-}\left(q_{0}, q_{1}, h\right)=\int_{0}^{h} f_{L}(q(t), \dot{q}(t)) \cdot \frac{\partial q(t)}{\partial q_{0}} \mathrm{~d} t.
\end{aligned}
\end{equation}
It can be shown \cite{marsden_discrete_2001} that the exact discrete system is equivalent to the corresponding continuous system. More specifically, the solutions $q:\left[0, t_{N}\right] \rightarrow Q$ of the forced Euler-Lagrange equations for $L$ and solutions $\left\{q_{k}\right\}_{k=0}^{N}$ of the forced discrete Euler-Lagrange equations for $L_{d}^{\mathrm{ex}}$ are related by
\begin{equation}
\begin{aligned}
q_{k} &=q\left(t_{k}\right) \text { for } k=0, \ldots, N, \\
q(t) &=q_{k, k+1}(t) \text { for } t \in\left[t_{k}, t_{k+1}\right] .
\end{aligned}
\end{equation}
Here the curves $q_{k, k+1}:\left[t_{k}, t_{k+1}\right] \rightarrow Q$ are the unique solutions of the Euler-Lagrange equations for $L$ satisfying $q_{k, k+1}(k h)=q_{k}$ and $q_{k, k+1}((k+1) h)=$ $q_{k+1}$.

\begin{remark}[Existence of a discrete Rayleigh potential]
Clearly, a discrete force $f_d$ on $Q\times Q$ is Rayleigh if and only if
\begin{equation}
  D_1 f_d^+ = -D_2 f_d^-.
\end{equation}
For an exact discrete force $f_d^{\mathrm{ex}}=(f_d^{\mathrm{ex}+}, f_d^{\mathrm{ex}-})$ associated with a forced continuous Lagrangian system $(L, f_L)$, this condition can be written as
\begin{equation}
\begin{aligned}
  \int_0^h &\left[2\frac{\partial f} {\partial q}\left(q(t), \dot q(t)  \right) \frac{\partial q(t)} {\partial q_0} \frac{\partial q(t)} {\partial q_1} 
  + \frac{\partial f} {\partial \dot q}\left(q(t), \dot q(t)  \right) \frac{\partial \dot q(t)} {\partial q_0} \frac{\partial q(t)} {\partial q_1} 
  \right. \\ &\left.
  + \frac{\partial f} {\partial \dot q}\left(q(t), \dot q(t)  \right) \frac{\partial q(t)} {\partial q_0} \frac{\partial \dot q(t)} {\partial q_1} 
  + 2 f_L \left(q(t), \dot q(t)  \right) \frac{\partial^2 q(t) } {\partial q_0 q_1}
   \right]\ \dd t = 0,
\end{aligned} \label{condition_discrete_Rayleigh_integral}
\end{equation}
where $q(t)$ is the solution of the forced Euler-Lagrange equations for $(L, f_L)$ with boundary conditions $q(0)=q_0$ and $q(h)=q_1$.
\end{remark}

Motivated by the computation of several examples with Mathematica but unable to find a proof, we claim the following statement.
\begin{conjecture}[relation between continuous and discrete Rayleigh systems] \label{conjecture_Rayleigh}
Consider a continuous Rayleigh system $(L, \mathcal{R})$ on $TQ$, and assume that $L$ is natural, i.e., 
\begin{equation}
  L(q, \dot{q}) = \frac{1}{2} g(\dot{q}, \dot{q}) - V(q)
\end{equation}
where $g$ and $V$ are a Riemannian metric and a function on $Q$, respectively. Let $f_L = - S^* (\dd \mathcal{R})$ be the continuous Rayleigh force. Then, the exact discrete force $f_d^E$ on $Q\times Q$ associated with $f_L$ is Rayleigh.
\end{conjecture}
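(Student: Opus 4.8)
By the Remark preceding the statement, the claim is equivalent to the identity $D_1 f_d^{\mathrm{ex}+} = -D_2 f_d^{\mathrm{ex}-}$, i.e.\ to the vanishing of the integral in \eqref{condition_discrete_Rayleigh_integral}; equivalently, it asserts that the pair $(f_d^{\mathrm{ex}-},-f_d^{\mathrm{ex}+})$ is, locally, $(D_1\mathcal R_d^{\mathrm{ex}},D_2\mathcal R_d^{\mathrm{ex}})$ for some function $\mathcal R_d^{\mathrm{ex}}$ on $Q\times Q$. The plan is to prove this by computing along the forced Euler--Lagrange solution $q(t)=q(t;q_0,q_1)$ with $q(0)=q_0$, $q(h)=q_1$, exactly as one obtains the exactness relations $D_1 L_d^{\mathrm{ex}}=-p_0-f_d^{\mathrm{ex}-}$ and $D_2 L_d^{\mathrm{ex}}=p_1-f_d^{\mathrm{ex}+}$ from the first variation of $\int_0^h L\,\dd t$, and then reducing the symmetry condition to a single integral identity for a ``second--order Jacobi field''.

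Concretely, introduce the (matrix) Jacobi fields $J_0(t)=\partial q(t)/\partial q_0$ and $J_1(t)=\partial q(t)/\partial q_1$ --- solutions of the first--variation equation of \eqref{forced_continuous_EL} along $q$ with $J_0(0)=\Id$, $J_0(h)=0$ and $J_1(0)=0$, $J_1(h)=\Id$ --- and the field $Z(t)=\partial^2 q(t)/\partial q_0\partial q_1$, which solves the first--variation equation with a quadratic source in $(J_0,J_1)$ and vanishes at both endpoints, $Z(0)=Z(h)=0$. Substituting $f_L=-S^*(\dd\mathcal R)$, i.e.\ $(f_L)_a=-\partial\mathcal R/\partial\dot q^a$, into $f_d^{\mathrm{ex}\pm}$, differentiating, and using that $\mathcal R$ has symmetric velocity Hessian (so terms $J_0^a\dot J_1^b+\dot J_0^a J_1^b$ assemble into $\tfrac{\dd}{\dd t}(J_0^a J_1^b)$ and contribute only boundary terms that vanish by the conditions on $J_0,J_1$), the condition $D_1 f_d^{\mathrm{ex}+}+D_2 f_d^{\mathrm{ex}-}=0$ collapses to an identity of the form $\int_0^h\bigl(A_{ab}(t)\,J_0^a J_1^b+B_a(t)\,Z^a\bigr)\,\dd t=0$, where $A$ and $B$ are built pointwise from $\mathcal R$ (and, when $\mathcal R$ is not quadratic in $\dot q$, from $g$ and $V$ through the equation of motion) along $q$. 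In the model case $Q=\RR$, $\mathcal R=\tfrac12\gamma\dot q^2$ the $A$--part drops out and the identity is literally $\int_0^h \dot q(t)\,\partial^2 q(t)/\partial q_0\partial q_1\,\dd t=0$.

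The remaining step --- proving this reduced identity --- is the real content, and it is exactly here that naturalness of $L$ must enter. For $L=\tfrac12 g(\dot q,\dot q)-V$ the Legendre transform is linear in $\dot q$, the Hessian $\partial^2 L/\partial\dot q\,\partial\dot q=g(q)$ depends on $q$ only, and \eqref{forced_continuous_EL} is of ``geodesic $+$ force'' type, so that the first--variation operator is governed by the Levi--Civita connection of $g$ and the source for $Z$ by the curvature of $g$ and the higher derivatives of $V$; one would try to integrate by parts once more against the formal adjoint of this operator, using $Z(0)=Z(h)=0$ together with autonomy of the flow (which makes $\dot q$ a solution of the first--variation equation, hence $\dot q(t)=J_0(t)\,\dot q(0)+J_1(t)\,\dot q(h)$), and to check that the residue is a total time--derivative. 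That naturalness genuinely cannot be dropped is visible already in dimension one: for the non--natural $L=\tfrac14\dot q^4$ with Rayleigh force $-\gamma\dot q$ the solution family is $q(t;q_0,q_1)=q_0+\hat q(t;q_1-q_0)$ and a closed--form computation gives $\int_0^h\dot q\,\partial^2 q/\partial q_0\partial q_1\,\dd t\neq 0$, so $D_1 f_d^{\mathrm{ex}+}+D_2 f_d^{\mathrm{ex}-}\neq 0$ and the exact discrete force is not Rayleigh.

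The main obstacle I anticipate is precisely this last identity. The naive guess $\mathcal R_d^{\mathrm{ex}}=\int_0^h\mathcal R(q,\dot q)\,\dd t$ fails (its variation produces $\partial\mathcal R/\partial q\cdot\delta q$ contributions absent from the force), and no elementary time--weighted correction $\int_0^h(\lambda(t)\mathcal R+\kappa(t)L)\,\dd t$ repairs it for a general Rayleigh potential: the boundary requirements $\lambda(0)=\lambda(h)=\kappa(0)=\kappa(h)=0$ needed to keep $\mathcal R_d^{\mathrm{ex}}$ boundary--term free are incompatible with the on--shell identity one needs (they do become compatible in special situations, e.g.\ constant mass and $\mathcal R$ proportional to the kinetic energy, where $e^{\gamma t}L$ is a genuine time--dependent Lagrangian for the forced flow --- which is presumably the kind of example that motivated the conjecture). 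So one must either find a more intrinsic closed expression for $\mathcal R_d^{\mathrm{ex}}$ built from the geometry of $(g,V,\mathcal R)$, or prove the closedness non--constructively via a curvature identity as above. A possibly cleaner alternative route is to pass through the doubled--variable formulation in the spirit of Refs.~\cite{galley_principle_2014,sato_martin_de_almagro_discrete_2020}: realise $(L,\mathcal R)$ as the ``physical--limit'' reduction of a conservative Lagrangian $\Lambda$ on $TQ\times TQ$, form its (genuine) exact discrete Lagrangian $\Lambda_d^{\mathrm{ex}}$ on $(Q\times Q)\times(Q\times Q)$, and show that the discrete dynamics it generates restricts along the diagonal to the exact discrete forced system, reading off $\mathcal R_d^{\mathrm{ex}}$ from $\Lambda_d^{\mathrm{ex}}$; the crux there becomes showing that exact discretisation commutes with the doubling--and--restriction procedure.
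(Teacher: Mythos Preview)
The statement you are attempting to prove is explicitly presented in the paper as a \emph{conjecture}, introduced with the sentence ``Motivated by the computation of several examples with Mathematica but unable to find a proof, we claim the following statement.'' There is therefore no proof in the paper to compare your attempt against: the authors verified the claim only in concrete instances (notably Example~\ref{example_harmonic_oscillator}, where the equation of motion is linear so $\partial^2 q/\partial q_0\partial q_1\equiv 0$ and the integral condition is trivially satisfied).

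Your proposal is not a proof either, and you are candid about this. What you have produced is a careful reduction of the problem: you correctly identify the integrability condition $D_1 f_d^{\mathrm{ex}+}=-D_2 f_d^{\mathrm{ex}-}$, set up the first and second Jacobi fields $J_0,J_1,Z$ with the right boundary data, and in the quadratic case $\mathcal R=\tfrac12\gamma\,g(\dot q,\dot q)$ reduce the claim to the vanishing of $\int_0^h \dot q\cdot Z\,\dd t$ along the forced solution. You also correctly observe that $\dot q$ solves the homogeneous first-variation equation by autonomy. The counterexample sketch with the non-natural $L=\tfrac14\dot q^4$ is apposite: translation invariance forces $q(t;q_0,q_1)=q_0+F(t,q_1-q_0)$ with $F$ genuinely nonlinear in its second argument, so $Z=-F_{uu}\neq 0$ and there is no structural reason for $\int_0^h F_t F_{uu}\,\dd t$ to vanish --- this supports the necessity of naturalness and goes beyond anything in the paper.

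The genuine gap is exactly the one you flag: for a natural Lagrangian with nonlinear potential (so that $Z\not\equiv 0$), you have not established the key identity. Your adjoint-integration-by-parts idea runs into the fact that the first-variation operator $m\tfrac{\dd^2}{\dd t^2}+\gamma\tfrac{\dd}{\dd t}+V''(q)$ is \emph{not} self-adjoint (because of the damping term), so $\mathcal L^*\dot q=-2\gamma\ddot q$ rather than $0$, and the boundary terms $m[\dot q\,\dot Z]_0^h$ do not obviously cancel. Neither your candidate weighted actions nor the doubled-variable route is carried to a conclusion. So the proposal should be read as a substantive research outline that clarifies where the difficulty lies, not as a proof; it neither confirms nor refutes the conjecture, and the paper offers nothing further.
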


\begin{example}[Harmonic oscillator with Rayleigh dissipation]\label{example_harmonic_oscillator}
Consider a 1-dimensional harmonic oscillator,
\begin{equation}
  L= \frac{1}{2}m \dot{q}^2 - \frac{1}{2}kq^2,
\end{equation}
with Rayleigh potential
\begin{equation}
  \mathcal{R}=\frac{r}{2} \dot{q}^2.
\end{equation}
Then the forced Euler-Lagrange equations yield
\begin{equation}
  m\ddot{q}+r\dot{q}+kq=0. \label{harmonic_oscillator_Rayleigh_ODE}
\end{equation}
Suppose that $4km>r^2$, and let
\begin{equation}
  a \coloneqq \frac{r}{2m},\quad 
  b \coloneqq \frac{\sqrt{4km-r^2}}{2m}.
  % b\coloneqq \frac{\sqrt{\abs{r^2-4km}}}{2m}.
\end{equation}
The solution of the ODE \eqref{harmonic_oscillator_Rayleigh_ODE} with boundary values $q(0)=q_0$ and $q(h)=q_1$ is then
\begin{equation}
  q(t) = e^{-a t} \left[ 
      q_0 \cos(b t)  + c \sin(b t) 
    \right],
\end{equation}
where
\begin{equation}
  c = \frac{e^{a h}q_1-\cos(bh) q_0}{\sin(bh)}.
\end{equation}
The exact discrete Lagrangian is then
% \begin{equation}
% \begin{aligned}
%   L_d(q_0,q_1) &= 
%   \frac{e^{-2 a h} }{8 a \left(a^2+b^2\right)}
%   \left\{2 a b e^{2 a h} \left(q_0^2+q_1^2\right) \cot (b h) 
%   \left(m \left(a^2+b^2\right)+k\right)
%    \right.\\&\left.
%   +\left(m \left(a^2+b^2\right)-k\right) 
%   \left[2 a^2 e^{2 a h} (q_0-q_1) (q_0+q_1)+b^2 \left(e^{2 a h}-1\right) \csc ^2(b h) 
%    \right.\right. \\&\left.\left.
%   \left(-2 q_0 q_1 e^{a h} \cos (b h)+q_1^2 e^{2 a h}+q_0^2\right)\right]
%   -2 a b q_0 q_1 e^{a h} \left(e^{2 a h}+1\right) \csc (b h) \left(m \left(a^2+b^2\right)+k\right)\right\},
% \end{aligned}
% \end{equation}
\begin{equation}
\begin{aligned} 
  L_d
  =&
  \frac{1}{16} (\coth (b h)-1)^2 \left[\frac{2 \left(m \left(b^2-a^2\right)+k\right) \left(q_0  e^{2 b h}-q_1  e^{h (a+b)}\right) \left(q_0 -q_1  e^{h (a+b)}\right)}{a}
  %%%%%%%%%%%%%%%
  \right.\\&\left.
  %%%%%%%%%%%%%%%
  +e^{-2 h (a+b)} \left(-\frac{2 e^{3 b h} \left(m \left(b^2-a^2\right)+k\right) \left(q_0  e^{b h}-q_1  e^{a h}\right) \left(q_0 -q_1  e^{h (a+b)}\right)}{a}
  %%%%%%%%%%%%%%%
  \right.\right.\\&\left.\left.
  %%%%%%%%%%%%%%%
  +\frac{e^{4 b h} \left(k-m (a-b)^2\right) \left(q_0 -q_1  e^{h (a+b)}\right)^2}{a-b}+\frac{\left(k-m (a+b)^2\right) \left(q_0  e^{2 b h}-q_1  e^{h (a+b)}\right)^2}{a+b}\right)
  %%%%%%%%%%%%%%%
  \right.\\&\left.
  %%%%%%%%%%%%%%%
  +\frac{\left(m (a-b)^2-k\right) \left(q_0 -q_1  e^{h (a+b)}\right)^2}{a-b}+\frac{\left(m (a+b)^2-k\right) \left(q_0  e^{2 b h}-q_1  e^{h (a+b)}\right)^2}{a+b}\right]
\end{aligned}
\end{equation}
and the exact discrete forces are
\begin{equation}
  f_d^+ (q_0,q_1) = \frac{1}{2} r \left(\frac{b q_0 \sinh (a h) \csc (b h)}{a}-q_1\right),
\end{equation}
and
\begin{equation}
  f_d^-(q_0,q_1) = \frac{r (a q_0-b q_1 \sinh (a h) \csc (b h))}{2 a}.
\end{equation}
One can check that condition \eqref{condition_discrete_Rayleigh_integral} holds. As a matter of fact, 
\begin{equation}
  \mathcal{R}_d (q_0,q_1) = \frac{1}{4} r \left(q_0^2+q_1^2\right) -\frac{b q_0 q_1 r \sinh (a h) \text{csch}(b h)}{2 a}
\end{equation}
is a discrete Rayleigh potential from which $f_d$ can be derived.
\end{example}

\subsection{Midpoint rule}
Consider a forced Lagrangian system $(L, f_L)$ on $TQ$ and assume that is regular. The associated \emph{midpoint rule discrete Lagrangian} $L_d$ on $Q\times Q$ is then given by \cite{marsden_discrete_2001}
\begin{equation}
  L_d^{\frac{1}{2}}\left( q_0, q_1, h  \right) = h L \left( \frac{q_0 + q_1}{2}, \frac{q_1-q_0}{h}  \right).
\end{equation}
Similarly, the \emph{midpoint rule discrete forces} on $Q\times Q$ are given by
\begin{equation}
  f_d^{\frac{1}{2}+}\left( q_0, q_1, h  \right) 
  = f_d^{\frac{1}{2}-}\left( q_0, q_1, h  \right) 
  = \frac{h}{2} f_L \left( \frac{q_0 + q_1}{2}, \frac{q_1-q_0}{h}  \right).
\end{equation}

The midpoint rule discrete force is Rayleigh if and only if
\begin{equation}
  D_1 f_d^{\frac{1}{2}\pm }(q_0, q_1) = -D_2 f_d^{\frac{1}{2}\pm }(q_0, q_1),
\end{equation}
which holds if and only if
\begin{equation}
  D_1 f_L (q, \dot q) = 0,
\end{equation}
or, in other words, if $f_L$ is homogeneous.

\begin{remark}
Let $(L, \mathcal{R})$ be a discrete Rayleigh system on $TQ$ and assume that $\mathcal{R}$ is homogeneous (i.e., $\partial \mathcal R/\partial q$=0). Then the associated midpoint rule discrete force is Rayleigh. As a matter of fact, we can define the \emph{midpoint rule discrete Rayleigh potential} $\mathcal R_d^{1/2}$, given by
\begin{equation}
  \mathcal R_d^{\frac{1}{2}}(q_0, q_1) = \frac{h}{2} \mathcal{R}\left(\dot{q}= \frac{q_1-q_0}{h} \right).
\end{equation}

\end{remark}

\begin{example} \label{example_Marsden_West}
Consider a Rayleigh system $(L, \mathcal{R})$ on $T\RR^2$, with
\begin{equation}
  L = \frac{1}{2} \norm{\dot q}^2 - \norm{q}^2 \left( \norm{q}^2 -1  \right)^2, 
\end{equation} 
and
\begin{equation}
  \mathcal{R} = \frac{1}{2} k \norm{\dot q}^2,
\end{equation}
for some constant $k$. Here $q=(q^1, q^2)$ are the Cartesian coordinates in $\RR^2$, $(q, \dot q )= (q^1, q^2 , \dot q^1, \dot q^2)$ are the induced fibred coordinates in $T\RR^2$, and $\norm{\cdot}$ denotes the Euclidean norm in $\RR^2$.

% Clearly, this system is invariant under $\mathrm{SO}(2)\cong \mathbb{S}^1$ acting by rotations. As a matter of fact, $\xi_{\RR^2}=q^2 \partial/\partial q^1 - q^1 \partial/\partial q^2$ is a symmetry of the forced Lagrangian (see Refs.~\cite{de_leon_symmetries_2021,lopez-gordon_geometry_2021}) and $\ell = \xi_{\RR^2}^v(L) = q^2 \dot q^1 - q^1 \dot q^2$ is the associated constant of the motion.

Let us take $k=10^{-3}$, so that it corresponds to Example 3.2.3 from Ref.~\cite{marsden_discrete_2001}. In Figure \ref{fig_energy_Marsden_West} we plot the evolution of the energy of the system for the initial conditions $q_0^1=0$, $\dot{q}_0=\left(1/2, 0 \right)$ and $E_L(q_0, \dot{q}_0)=11/40$. We compare the variational midpoint rule and the standard fourth-order Runge-Kutta method (see for instance \cite{newman_computational_2013,butcher_john_c_rungekutta_2016}) with a benchmark numerical integration of high precision. Observe that the variational midpoint rule reproduces the energy dissipation correctly, whereas the Runge-Kutta or other standard integrators do not. These effect is specially relevant when the external force is small compared to the magnitude of the conservative dynamics and the time period of integration (see Ref.~\cite{hairer_invariant_1999}).

\begin{figure}[h]
\centering
\includegraphics[width=.5\linewidth]{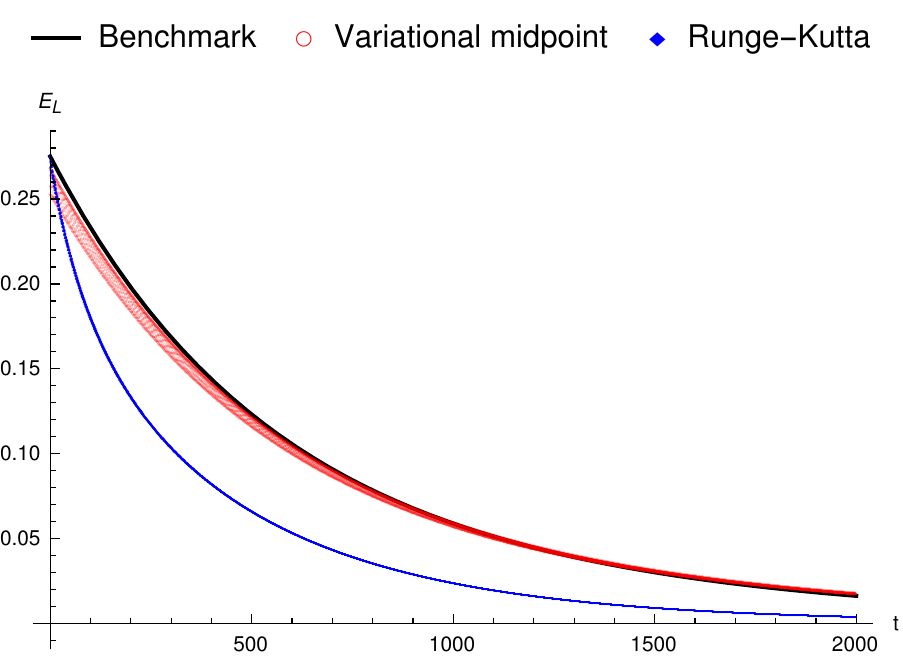}
\caption{Energy of a Rayleigh system computed with the variational midpoint and fourth-order Runge-Kutta methods. Observe the remarkable supremacy of the former, despite being a lower order method.}
\label{fig_energy_Marsden_West}
\end{figure}
\end{example}

\subsection{Discrete Noether's theorem}
% \todo[inline, color=green!40]{Ejemplo(s)}

Let $X$ be a vector field on $Q$. The \emph{complete lift} of $X$ to $Q\times Q$ is the vector field $X^c$ on $Q\times Q$ given by
\begin{equation}
  X^c (q_0, q_1) = \left(X(q_0), X(q_1)  \right)
\end{equation}
for each $(q_0,q_1)\in Q\times Q$. 

Consider the left action $\Phi:G\times Q \to Q$ of a Lie group $G$ on $Q$, and let the vector field $\xi_Q$ on $Q$ be the infinitesimal generator of the action on $Q$. This group action can be lifted to $Q\times Q$ by the product
\begin{equation}
  \Phi_g^{Q\times Q} (q_0, q_1) = \left(\Phi_g (q_0), \Phi_g (q_1)  \right).
\end{equation}
The infinitesimal generator of this action is the vector field $\xi_{Q\times Q} = \xi_Q^c$ on $Q\times Q$.
%  given by
% \begin{equation}
%   \xi_{Q\times Q} (q_0, q_1) = \left(\xi_Q (q_0), \xi_Q (q_1)  \right).
% \end{equation}

Let $\mathfrak{g}$ denote the Lie algebra of $G$, and $\mathfrak{g}^*$ its dual.
Let us introduce the \emph{discrete momentum maps} $J_{L_d}^{f+},J_{L_d}^{f-}:Q\times Q\to \mathfrak{g}^*$ given by
\begin{equation}
\begin{aligned}
  & \left\langle J_{L_d}^{f+}(q_0, q_1), \xi   \right\rangle 
  = \left\langle \Legp (q_0,q_1), \xi_Q (q_1) \right\rangle,   \\
  & \left\langle J_{L_d}^{f-}(q_0, q_1), \xi   \right\rangle 
  = \left\langle \Legm (q_0,q_1), \xi_Q (q_0) \right\rangle . 
\end{aligned}
\end{equation}
When $\langle J_{L_d}^{f+}, \xi  \rangle = \langle J_{L_d}^{f-}, \xi  \rangle $, for some $\xi \in \mathfrak g$, we can define the function
\begin{equation}
    \begin{aligned}
      J_d^\xi:Q\times Q &\to \RR \\
      (q_0, q_1) & \mapsto \left\langle J_{L_d}^{f\pm}, \xi  \right\rangle  (q_0, q_1).
    \end{aligned}
\end{equation}  

% Let $(L_d, f_d)$ be a forced discrete Lagrangian system on $Q\times Q$.

% \begin{definition}
% A vector field $X$ on $Q$ is called a \emph{symmetry of the forced discrete Lagrangian} for $(L_d, f_d)$ if
% \begin{equation}
%   X^d(L_d) + f_d(X^d) = 0.
% \end{equation}
% \end{definition}

\begin{theorem}[Discrete forced Noether's theorem]\label{theorem_Noether}
Let $(L_d, f_d)$ be a forced discrete Lagrangian system on $Q\times Q$. Let $G$ be a Lie group acting on $Q$ and let $\mathfrak{g}$ be the Lie algebra of $G$. Then $J_d^\xi$ is a constant of the motion 
% (i.e., it is preserved by the discrete Lagrangian flow, $J_d^\xi = \langle J_d^{f\pm} \circ \mathcal{F}_{L_d}^{f_d}, \xi\rangle$) 
if and only if
\begin{equation}
  \xi_{Q\times Q}(L_d) + f_d(\xi_{Q\times Q}) = 0,
\end{equation}
for some $\xi \in \mathfrak g$.
% Then, the following statements are equivalent:
% \begin{enumerate}
% \item For every $\xi\in \mathfrak g$, 
% % $\xi_{Q}$ is a symmetry of the forced discrete Lagrangian. 
% \begin{equation}
%   \xi_{Q\times Q}(L_d) + f_d(\xi_{Q\times Q}) = 0.
% \end{equation}
% \item 
% The two discrete momentum maps coincide, namely $J_{L_d}^{f+} = J_{L_d}^{f-} \eqqcolon J_{L_d}^{f}$, and the discrete momentum map $J_{L_d}^{f}$ is preserved by the discrete Lagrangian flow $\mathcal{F}_{L_d}^{f_d}$.
% \end{enumerate}
\end{theorem}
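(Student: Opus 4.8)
The plan is to establish the equivalence by computing the variation of the discrete action along the infinitesimally transformed sequence and matching it against the forced discrete Euler–Lagrange expression. First I would let $\{q_k\}_{k=0}^N$ be an arbitrary solution of the forced discrete Euler–Lagrange equations \eqref{discrete_forced_EL}, and consider the variation $\delta q_k = \xi_Q(q_k)$ induced by the lifted action; crucially this is \emph{not} required to vanish at the endpoints, since we want a statement about arbitrary solutions rather than about stationarity. Writing out $\delta S_d^N + \sum_k [f_d^-(q_k,q_{k+1})\cdot\delta q_k + f_d^+(q_k,q_{k+1})\cdot\delta q_{k+1}]$ and performing the usual discrete summation by parts, the interior terms collect into $\sum_{k=1}^{N-1}[D_2 L_d(q_{k-1},q_k) + D_1 L_d(q_k,q_{k+1}) + f_d^+(q_{k-1},q_k) + f_d^-(q_k,q_{k+1})]\cdot\xi_Q(q_k)$, which vanishes on a solution, leaving only the boundary contribution. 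That boundary contribution is exactly $\langle \Legp(q_{N-1},q_N), \xi_Q(q_N)\rangle - \langle \Legm(q_0,q_1), \xi_Q(q_0)\rangle = \langle J_{L_d}^{f+}(q_{N-1},q_N),\xi\rangle - \langle J_{L_d}^{f-}(q_0,q_1),\xi\rangle$, using the definitions of the forced discrete Legendre transforms and of the discrete momentum maps.

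Next I would observe that, by the invariance hypothesis $\xi_{Q\times Q}(L_d) + f_d(\xi_{Q\times Q}) = 0$, the left-hand side $\delta S_d^N + \sum_k[\cdots]$ — which is precisely $\sum_{k=0}^{N-1}\big(\xi_{Q\times Q}(L_d)(q_k,q_{k+1}) + f_d(\xi_{Q\times Q})(q_k,q_{k+1})\big)$ by definition of the complete lift and of the pairing of $f_d$ with a vector field — vanishes identically. Combining this with the previous paragraph gives $\langle J_{L_d}^{f+}(q_{N-1},q_N),\xi\rangle = \langle J_{L_d}^{f-}(q_0,q_1),\xi\rangle$; applying the same argument on every subinterval $\{q_j,\dots,q_{N}\}$ (or equivalently noting the telescoping structure) yields $\langle J_{L_d}^{f+}(q_{k-1},q_k),\xi\rangle = \langle J_{L_d}^{f-}(q_k,q_{k+1}),\xi\rangle$ for all $k$, i.e. $J_d^\xi$ is well defined along the trajectory and, since $p_{k-1,k}^+ = p_{k,k+1}^- = p_k$, its common value $\langle \Legm(q_k,q_{k+1}),\xi_Q(q_k)\rangle = \langle J_{L_d}^{f-}(q_k,q_{k+1}),\xi\rangle$ at step $k$ equals the value $\langle J_{L_d}^{f+}(q_k,q_{k+1}),\xi\rangle$ at step $k+1$; hence $J_d^\xi\circ\mathcal F_{L_d}^{f_d} = J_d^\xi$, which is the conservation statement.

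For the converse, I would run the computation backwards: if $J_d^\xi$ is conserved along every solution, then for every solution segment the boundary term above vanishes, so $\sum_{k=0}^{N-1}\big(\xi_{Q\times Q}(L_d) + f_d(\xi_{Q\times Q})\big)(q_k,q_{k+1}) = 0$ for all solutions of arbitrary length. Since through any point $(q_0,q_1)\in Q\times Q$ there passes a solution of the (forced) discrete flow — using hyperregularity of $L_d$, equivalently invertibility of $\Legpm$, so that $\mathcal F_{L_d}^{f_d}$ is defined — taking $N=1$ already forces $(\xi_{Q\times Q}(L_d) + f_d(\xi_{Q\times Q}))(q_0,q_1) = 0$ pointwise on the image of the flow, and a density/continuity argument (or simply working with the exact discrete Lagrangian where solutions fill $Q\times Q$) promotes this to all of $Q\times Q$.

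The main obstacle I anticipate is handling the non-vanishing boundary terms cleanly and making the converse direction rigorous: the summation-by-parts identity is routine, but one must be careful that the ``variation'' $\delta q_k = \xi_Q(q_k)$ is legitimate (it is, being the infinitesimal generator of a genuine group action, hence integrable to finite variations) and that in the converse one genuinely has enough solutions passing through arbitrary configuration pairs to conclude the pointwise identity — this is where regularity of $L_d$ (so that $\mathcal F_{L_d}^{f_d}$ is a local diffeomorphism) is essential. I expect the pairing bookkeeping $f_d(\xi_{Q\times Q}) = f_d^-(\cdot)\cdot\xi_Q(q_0) + f_d^+(\cdot)\cdot\xi_Q(q_1)$ to be the only place where sign conventions need care.
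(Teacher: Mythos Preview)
Your forward direction is essentially the paper's argument: expand $\sum_{k=0}^{N-1}\bigl(\xi_{Q\times Q}(L_d)+f_d(\xi_{Q\times Q})\bigr)(q_k,q_{k+1})$, regroup by summation by parts, kill the interior terms with the forced discrete Euler--Lagrange equations, and identify the boundary contribution with $\langle J_{L_d}^{f+}(q_{N-1},q_N)-J_{L_d}^{f-}(q_0,q_1),\xi\rangle$. This is exactly what the paper does.

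Where you diverge from the paper, and make life harder than necessary, is the converse. You propose to deduce the pointwise vanishing of $\xi_{Q\times Q}(L_d)+f_d(\xi_{Q\times Q})$ from conservation along \emph{solutions}, and then worry about hyperregularity, existence of solutions through arbitrary pairs, and density/continuity. None of this is needed. The paper's key observation is that the $N=1$ instance of your computation is a purely algebraic identity on all of $Q\times Q$, not a statement about trajectories:
\[
\bigl(\xi_{Q\times Q}(L_d)+f_d(\xi_{Q\times Q})\bigr)(q_0,q_1)
= \langle J_{L_d}^{f+}(q_0,q_1)-J_{L_d}^{f-}(q_0,q_1),\xi\rangle .
\]
No Euler--Lagrange equations enter here because there are no interior points. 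Hence the invariance condition is \emph{equivalent}, pointwise on $Q\times Q$, to $\langle J_{L_d}^{f+},\xi\rangle=\langle J_{L_d}^{f-},\xi\rangle$, which is precisely the well-definedness of $J_d^\xi$. The converse is then immediate: if $J_d^\xi$ is a constant of the motion it is in particular well defined, so the pointwise identity forces the invariance condition. Your regularity and density detour can be dropped entirely.
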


\begin{proof}
We have that
\begin{equation}
\begin{aligned}
  \left\langle \dd L_d + f_d, \xi_{Q\times Q}  \right\rangle (q_0, q_1)
  % = D_1 L_d (q_0, q_1) \cdot \xi_Q(q_0) + D_2 L_d (q_0, q_1) \cdot \xi_Q(q_1)  
  % + f_d^- (q_0, q_1) \cdot \xi_Q(q_0) + f_d^+ (q_0, q_1) \cdot \xi_Q(q_1)
 & = \left(D_1 L_d+ f_d^-  \right)  (q_0, q_1) \cdot \xi_Q(q_0) 
    + \left(D_2 L_d+ f_d^+  \right)  (q_0, q_1) \cdot \xi_Q(q_1) \\
 & = \Legp(q_0, q_1) \cdot \xi_q(q_1) - \Legm (q_0, q_1) \cdot \xi_Q(q_0) \\
 & = \left(J_{L_d}^{f+} - J_{L_d}^{f-}  \right)(q_0,q_1) \cdot \xi,
\end{aligned}
\end{equation}
so  $\langle J_{L_d}^{f+}, \xi \rangle = \langle J_{L_d}^{f-}, \xi \rangle = J^\xi$ if and only if $\xi_{Q\times Q}(L_d) + f_d(\xi_{Q\times Q})$ vanishes.
% so $J_{L_d}^{f+} = J_{L_d}^{f-}$ if and only if $\xi_{Q\times Q}(L_d) + f_d(\xi_{Q\times Q})$ vanishes for every $\xi \in \mathfrak{g}$. 
%%%%%%%%%%%
% Taking variations vanishing at the endpoints of the form $\delta q_k = \xi_Q(q_k)$, 
% for $\xi \in \mathfrak g$, 
Moreover, we have
\begin{equation}
\begin{aligned}
  % \delta & \sum_{k=0}^{N-1} 
  % L_d (q_k, q_{k+1})
  % +\sum_{k=0}^{N-1}\left[f_{d}^{-}\left(q_{k}, q_{k+1}\right) \cdot \delta q_{k}+f_{d}^{+}\left(q_{k}, q_{k+1}\right) \cdot  \delta q_{k+1}\right]\\
  &  \sum_{k=0}^{N-1} \left\langle \dd L_d + f_d, \xi_{Q\times Q}  \right\rangle (q_k, q_{k+1})\\
  & = \sum_{k=1}^{N-1}\left[D_{2} L_{d}\left(q_{k-1}, q_{k}\right)+D_{1} L_{d}\left(q_{k}, q_{k+1}\right)+f_{d}^{+}\left(q_{k-1}, q_{k}\right)+f_{d}^{-}\left(q_{k}, q_{k+1}\right)\right] \cdot \xi_{Q}\left(q_{k}\right)\\
  &\quad +\left[D_{2} L_{d}\left(q_{N-1}, q_{N}\right)+f_{d}^{+}\left(q_{N-1}, q_{N}\right)\right] \cdot \xi_{Q}\left(q_{N}\right)
   +\left[D_{1} L_{d}\left(q_{0}, q_{1}\right)+f_{d}^{-}\left(q_{0}, q_{1}\right)\right] \cdot \xi_{Q}\left(q_{0}\right)\\
  &=\left[D_{2} L_{d}\left(q_{N-1}, q_{N}\right)+f_{d}^{+}\left(q_{N-1}, q_{N}\right)\right] \cdot \xi_{Q}\left(q_{N}\right)
   +\left[D_{1} L_{d}\left(q_{0}, q_{1}\right)+f_{d}^{-}\left(q_{0}, q_{1}\right)\right] \cdot \xi_{Q}\left(q_{0}\right)\\
 & =\Legp \left(q_{N-1}, q_{N}\right) \cdot \xi_{Q}\left(q_{N}\right)-\Legm \left(q_{0}, q_{1}\right) \cdot \xi_{Q}\left(q_{0}\right)\\
 & =\left\langle J_{L_d}^{f+} (q_{N-1}, q_N) - J_{L_d}^{f-} (q_0, q_1), \xi  \right\rangle
   % = \left\langle \left(J_{L_d}^f \circ \mathcal{F}_{L_d}^{f_d} - J_{L_d}^f   \right)(q_0, q_1), \xi \right\rangle.
  = J_d^\xi  (q_{N-1}, q_N) - J_d^\xi (q_0, q_1),
\end{aligned}
\end{equation}
where we have used the forced discrete Euler-Lagrange equations \eqref{discrete_forced_EL}. Therefore $J_d^\xi$ is a constant of the motion (i.e., $J_d^\xi  (q_{N-1}, q_N)= J_d^\xi  (q_{0}, q_1)$) if and only if $\xi_{Q\times Q}(L_d) + f_d(\xi_{Q\times Q})$ vanishes. 
% If his holds for every $\xi \in \mathfrak g$, the Lagrange-d'Alembert principle implies that
% \begin{equation}
%   J_{L_d}^f \circ \mathcal{F}_{L_d}^{f_d} - J_{L_d}^f = 0.
% \end{equation}
% The converse is straightforward.
\end{proof}

% \begin{remark}
% This result was previously found by Marsden and West \cite[Theorem 3.2.1]{marsden_discrete_2001}, however they assumed a stronger hypothesis, namely $\xi_{Q\times Q} (L_d)$ and $f_d(\xi_{Q\times Q})$ vanishing independently 
% % \begin{equation}
% % \begin{aligned}
% %   & \xi_{Q\times Q} (L_d) = 0,  \\
% %   & f_d(\xi_{Q\times Q}) = 0,
% % \end{aligned}
% % \end{equation}
% for all $\xi \in \mathfrak g$. As a matter of fact, this is the discrete analogue of the first statement from our reduction lemma (see Ref.~\cite{de_leon_symmetries_2021}, see also Refs.~\cite{lopez-gordon_geometry_2021,de_leon_geometric_2022}). The discrete analogue of this lemma is as follows.
% \end{remark}

% \todo[inline, color=blue!40]{En el primer paper nos interesaba considerar un álgebra que dejara invariante el lagrangiano, ya que a partir del lagrangiano definíamos la aplicación momento, y luego la subálgebra que dejaba además invariante la fuerza externa. Aquí, como las aplicaciones momento van a depender de $f_d$  además de $L_d$, no veo qué nos aporta pedir que el lagrangiano y la fuerza discretos sean invariantes de forma independiente, pero lo miraré con más detalle.}

% \todo[inline, color=blue!40]{Las simetrías discretas de Noether y demás son bastante más complicadas, ya que no tenemos una forma simpléctica única.}

\begin{theorem}
Let $(L_d, f_d)$ be a forced discrete Lagrangian system on $Q\times Q$. Let $G$ be a Lie group acting on $Q$ and let $\mathfrak{g}$ be the Lie algebra of $G$.
Suppose that $L_d$ is $G$-invariant. Then, for each $\xi\in \mathfrak{g}$,
\begin{enumerate}
\item $J_d^\xi$ is a constant of the motion if and only if 
\begin{equation}
  f_d(\xi_{Q\times Q}) = 0. \label{eq_constant_motion_subalgebra}
\end{equation}
\item If the equation above holds, then $\xi$ leaves $f_d$ invariant if and only if
\begin{equation}
  \contr{\xi_{Q\times Q}} \dd f_d = 0.
\end{equation}
\end{enumerate}
Moreover, the vector subspace $\mathfrak{g}_{f_d}$ of $\mathfrak{g}$ given by
\begin{equation}
  \mathfrak{g}_{f_d} = \left\{\xi \in \mathfrak{g}\mid  f_d(\xi_{Q\times Q}) = 0,\ \contr{\xi_{Q\times Q}} \dd f_d=0   \right\}
\end{equation}
is a Lie subalgebra of $\mathfrak g$.
\end{theorem}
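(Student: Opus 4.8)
The plan is to handle the three assertions one at a time; each reduces to a short calculation once the appropriate Cartan-calculus identity is invoked, and nothing deep is needed beyond Theorem~\ref{theorem_Noether}.

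\textbf{Part (i).} I would simply specialize Theorem~\ref{theorem_Noether}. The $G$-invariance of $L_d$ means $\liedv{\xi_{Q\times Q}} L_d = \xi_{Q\times Q}(L_d) = 0$ for every $\xi\in\mathfrak g$, so the condition $\xi_{Q\times Q}(L_d)+f_d(\xi_{Q\times Q})=0$ of Theorem~\ref{theorem_Noether} collapses to $f_d(\xi_{Q\times Q})=0$. Furthermore, the first display in the proof of Theorem~\ref{theorem_Noether} gives $\langle J_{L_d}^{f+}-J_{L_d}^{f-},\xi\rangle=\langle\dd L_d+f_d,\xi_{Q\times Q}\rangle$, which under $G$-invariance is just $f_d(\xi_{Q\times Q})$; hence $J_d^\xi$ is well defined exactly when $f_d(\xi_{Q\times Q})=0$, and in that case it is automatically a constant of the motion. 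I would state explicitly the small logical point that ``$J_d^\xi$ is a constant of the motion'' presupposes $J_d^\xi$ is defined, and that both facts are controlled by the single equation $f_d(\xi_{Q\times Q})=0$.

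\textbf{Part (ii).} Assuming $f_d(\xi_{Q\times Q})=0$, apply Cartan's magic formula to the one-form $f_d$ along $\xi_{Q\times Q}$:
\begin{equation}
  \liedv{\xi_{Q\times Q}} f_d = \contr{\xi_{Q\times Q}}\dd f_d + \dd\!\left(\contr{\xi_{Q\times Q}} f_d\right) = \contr{\xi_{Q\times Q}}\dd f_d + \dd\!\left(f_d(\xi_{Q\times Q})\right).
\end{equation}
The last term vanishes by hypothesis, so $\liedv{\xi_{Q\times Q}} f_d = \contr{\xi_{Q\times Q}}\dd f_d$. Since ``$\xi$ leaves $f_d$ invariant'' means $\liedv{\xi_{Q\times Q}} f_d=0$, the stated equivalence with $\contr{\xi_{Q\times Q}}\dd f_d=0$ is immediate.

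\textbf{The subalgebra property.} First, $\mathfrak g_{f_d}$ is a linear subspace: $\xi\mapsto\xi_Q$, hence $\xi\mapsto\xi_{Q\times Q}=\xi_Q^c$, is $\RR$-linear, so $\xi\mapsto f_d(\xi_{Q\times Q})$ and $\xi\mapsto\contr{\xi_{Q\times Q}}\dd f_d$ are linear and $\mathfrak g_{f_d}$ is the intersection of their kernels. For closure under the bracket I would use two facts: (a) for the diagonal lift $X^c(q_0,q_1)=(X(q_0),X(q_1))$ one has $[X^c,Y^c]=[X,Y]^c$ (each factor sees only the bracket on $Q$), which together with $\xi\mapsto\xi_Q$ being a Lie-algebra (anti)homomorphism gives $[\xi_{Q\times Q},\eta_{Q\times Q}]=\pm[\xi,\eta]_{Q\times Q}$ (the sign is irrelevant, every condition below being an equality with zero); and (b) the commutator identity $\contr{[X,Y]}\alpha=\liedv{X}\contr{Y}\alpha-\contr{Y}\liedv{X}\alpha$. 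For $\xi,\eta\in\mathfrak g_{f_d}$, applying (b) with $\alpha=f_d$ yields
\begin{equation}
  \contr{[\xi_{Q\times Q},\eta_{Q\times Q}]} f_d = \liedv{\xi_{Q\times Q}}\!\left(\contr{\eta_{Q\times Q}} f_d\right) - \contr{\eta_{Q\times Q}}\!\left(\liedv{\xi_{Q\times Q}} f_d\right) = 0,
\end{equation}
since $\contr{\eta_{Q\times Q}} f_d=f_d(\eta_{Q\times Q})=0$ and $\liedv{\xi_{Q\times Q}} f_d=0$ by part (ii); hence $f_d([\xi,\eta]_{Q\times Q})=0$. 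Applying (b) with $\alpha=\dd f_d$ yields
\begin{equation}
  \contr{[\xi_{Q\times Q},\eta_{Q\times Q}]}\dd f_d = \liedv{\xi_{Q\times Q}}\!\left(\contr{\eta_{Q\times Q}}\dd f_d\right) - \contr{\eta_{Q\times Q}}\!\left(\liedv{\xi_{Q\times Q}}\dd f_d\right) = 0,
\end{equation}
since $\contr{\eta_{Q\times Q}}\dd f_d=0$ and $\liedv{\xi_{Q\times Q}}\dd f_d=\dd(\liedv{\xi_{Q\times Q}} f_d)=0$; hence $\contr{[\xi,\eta]_{Q\times Q}}\dd f_d=0$, so $[\xi,\eta]\in\mathfrak g_{f_d}$.

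\textbf{Main obstacle.} There is no substantial obstacle — the argument is entirely formal. The one spot demanding care is the bracket identity in the last part: one must use specifically the diagonal lift (for which $[X^c,Y^c]=[X,Y]^c$) rather than, say, the complete lift to $TQ$, and one should note that whichever convention makes $\xi\mapsto\xi_Q$ a homomorphism versus an anti-homomorphism changes only a sign that is harmless here. A secondary subtlety is the already-mentioned observation in part (i) that well-definedness of $J_d^\xi$ is itself equivalent to the condition $f_d(\xi_{Q\times Q})=0$.
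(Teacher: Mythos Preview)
Your proposal is correct and follows essentially the same route as the paper: part~(i) specializes Theorem~\ref{theorem_Noether} under $G$-invariance, part~(ii) is Cartan's formula, and the subalgebra property is obtained by applying the identity $\contr{[X,Y]}\alpha=\liedv{X}\contr{Y}\alpha-\contr{Y}\liedv{X}\alpha$ to $\alpha=f_d$ and $\alpha=\dd f_d$, exactly as the paper does. Your explicit remarks on linearity, on the bracket compatibility $[X^c,Y^c]=[X,Y]^c$ of the diagonal lift, and on the harmless sign from the (anti)homomorphism convention are useful clarifications but do not depart from the paper's argument.
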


\begin{proof}
Clearly, $L_d$ is $G$-invariant if and only if
\begin{equation}
  \xi_{Q\times Q} (L_d) = 0
\end{equation}
for each $\xi \in \mathfrak{g}$. 
% \todo{Complete this part}
Combining this with Theorem \ref{theorem_Noether}, we have that $J_d^\xi$ is a constant of the motion if and only if 
\begin{equation}
  f_d(\xi_{Q\times Q}) = 0.
\end{equation}
If this equation holds, $f_d$ is $\xi$-invariant if and only if
\begin{equation}
  \liedv{\xi_{Q\times Q}} f_d = \contr{\xi_{Q\times Q}} \dd f_d = 0.
\end{equation}
Let $\xi, \eta\in \mathfrak{g}_{f_d} = \left\{\xi \in \mathfrak{g}\mid  f_d(\xi_{Q\times Q}) = 0,\ \contr{\xi_{Q\times Q}} \dd f_d=0   \right\}$. Then,
\begin{equation}
\begin{aligned}
  f_d \left([\xi_Q, \eta_Q]^c  \right)
  &= f_d \left([\xi_{Q\times Q}, \eta_{Q\times Q}]  \right)
  = \contr{[\xi_{Q\times Q}, \eta_{Q\times Q}]} f_d
  = \liedv{\xi_{Q\times Q}} \contr{\eta_{Q\times Q}} f_d 
    - \contr{\eta_{Q\times Q}} \liedv{\xi_{Q\times Q}} f_d\\
  &= \xi_{Q\times Q}\left(f_d(\eta_{Q\times Q})  \right)
    -\eta_{Q\times Q}\left(f_d(\xi_{Q\times Q})  \right)
    - \contr{\eta_{Q\times Q}} \contr{\xi_{Q\times Q}} \dd f_d
  = 0,
\end{aligned}
\end{equation}
and
\begin{equation}
\begin{aligned}
  \contr{[\xi_Q, \eta_Q]^c} \dd f_d
  &= \contr{[\xi_{Q\times Q}, \eta_{Q\times Q}] } \dd f_d
  = \liedv{\xi_{Q\times Q}} \contr{\eta_{Q\times Q}} \dd f_d 
    - \contr{\eta_{Q\times Q}} \liedv{\xi_{Q\times Q}} \dd f_d\\
  &= \liedv{\xi_{Q\times Q}} \contr{\eta_{Q\times Q}} \dd f_d 
    - \contr{\eta_{Q\times Q}} \dd \liedv{\xi_{Q\times Q}} f_d
  = 0.
\end{aligned}
\end{equation}
Since $\xi\mapsto \xi_Q$ is a Lie algebra antihomomorphism \cite{ortega_momentum_2004}, this proves that $[\xi, \eta]\in \mathfrak{g}_{f_d}$.
\end{proof}

This is the discrete analogue of the first statement from our reduction lemma. \cite[Lemma 15]{de_leon_symmetries_2021} (see also Refs.~\cite{lopez-gordon_geometry_2021,de_leon_geometric_2022}). 
The first statement was previously found by Marsden and West \cite[Theorem 3.2.1]{marsden_discrete_2001}.

\begin{remark}
Let $\xi_Q\in \mathfrak{X}(Q)$ be the infinitesimal generator of the action of $G$ on $Q$. Let us define the vector field $\widehat \xi_{Q\times Q}\in \mathfrak{X}(Q\times Q)$ given by 
\begin{equation}
  \widehat \xi_{Q\times Q}(q_0, q_1) = \left(\xi_Q(q_0), -\xi_Q(q_1)  \right).
\end{equation}
Let $(L_d, \mathcal{R}_d)$ be a discrete Rayleigh system on $Q\times Q$, with associated discrete external force $f_d=(f_d^-, f_d^+)$. Then,
\begin{equation}
\begin{aligned}
    \widehat{\xi}_{Q\times Q} (\mathcal R_d) (q_0, q_1) 
   & = D_1  \mathcal R_d (q_0, q_1) \cdot \xi_Q(q_0)
    - D_2 \mathcal R_d (q_0, q_1) \cdot \xi_Q(q_1)\\
    &= f_d^-(q_0, q_1) \cdot \xi_Q(q_0) + f_d^+(q_0, q_1) \cdot \xi_Q(q_1)
    = f_d(q_0, q_1) \cdot \xi_{Q\times Q}(q_0, q_1).
\end{aligned}
\end{equation}
Hence, $J_d^\xi$ is a constant of the motion  
if and only if
\begin{equation}
  \xi_{Q\times Q}(L_d) +   \widehat{\xi}_{Q\times Q} (\mathcal R_d)  = 0.
\end{equation}
%  Then, Eq.~\eqref{eq_constant_motion_subalgebra} reduces to
% \begin{equation}
%   \widehat \xi_{Q\times Q}(\mathcal{R}_d) = 0.
% \end{equation}
\end{remark}

\begin{example} Consider a discrete Rayleigh system $(L_d, \mathcal R_d)$ on $\RR^2 \times \RR^2$ of the form
\begin{equation}
\begin{aligned}
  &L_d(r_1, \theta_1, r_2, \theta_2) 
    % = \frac{h}{2} \left(\frac{q_2-q_1}{h}  \right)^2
    = \frac{h}{2} \left(\frac{r_2-r_1}{h}  \right)^2
    +\frac{h}{2} \left(\frac{r_1+r_2}{2}  \right)^2 \left(\frac{\theta_2-\theta_1}{h}  \right)^2
    - V \left(\frac{r_1+r_2}{2}  \right)
    ,\\
    &\mathcal R_d (r_1, \theta_1, r_2, \theta_2)  
    = F \left(r_1, r_2, {\theta_1+\theta_2}  \right),
\end{aligned}
\end{equation}
% where $\norm{\cdot}$ denotes the Euclidean norm in $\RR^2$. ç
where $(r_1, \theta_1, r_2, \theta_2)$ are the coordinates in $\RR^2\times \RR^2$ induced by the polar coordinates $(r, \theta)$ in $\RR^2$. For instance, $(L_d, \mathcal R_d)$ could be the midpoint rule discretization of the Rayleigh system $(L, \mathcal R)$ from Example \ref{example_Marsden_West}.

Consider the action of $\mathbb{S}^1$ on $\RR^2$ by rotations. The infinitesimal generator of this action is $\xi_{\RR^2}= \partial/\partial \theta$, so 
\begin{equation}
  \xi_{\RR^2\times \RR^2}= \frac{\partial  } {\partial \theta_1} + \frac{\partial  } {\partial \theta_2}, \qquad
  \widehat \xi_{\RR^2\times \RR^2}= \frac{\partial  } {\partial \theta_1} - \frac{\partial  } {\partial \theta_2}.
\end{equation}
 Clearly, $\xi_{\RR^2 \times \RR^2}(L_d)=0$ and $\widehat \xi_{\RR^2 \times \RR^2}(\mathcal R_d)=0$, so
 \begin{equation}
   J^\xi (r_1, \theta_1, r_2, \theta_2) = \left(\frac{r_1+r_2}{2}  \right)^2 \left(\frac{\theta_2-\theta_1}{h}  \right)
   - D_3 F(r_1, r_2, \theta_1+ \theta_2 )
 \end{equation}
is a constant of the motion. 
In particular, if $D_3 F=0$ the angular momentum is conserved.
\end{example}

\section{Discrete Hamilton-Jacobi theory for systems with\\ external forces}
\label{section_HJ}

% \todo[inline, color=green!40]{Reordenar en subsecciones}

% \subsection{The discrete flow approach}

In this section, we develop a Hamilton-Jacobi theory for forced discrete Hamiltonian systems. Here, the projected object is a discrete flow. See \cite{de_leon_geometry_2018} for an analogous theory for un-forced discrete Hamiltonian systems, see also \cite{ohsawa_discrete_2011}. The main difference with the un-forced theory is that we cannot consider flows on $Q$ and $T^*Q$, but rather on $Q\times Q$ and $T^*(Q\times Q)$.

From Eq.~\eqref{discrete_action_Hamiltonian}, we have
\begin{equation}
  S_{\mathrm{d}}^{k+1}\left(q_{k+1}\right)-S_{\mathrm{d}}^{k}\left(q_{k}\right)=p_{k+1} \cdot q_{k+1}-H_{\mathrm{d}}^{+}\left(q_{k}, p_{k+1}\right) \label{derive_H-J_1}
\end{equation}
which derived with respect to $q_{k+1}$ yields
\begin{equation}
D S_{\mathrm{d}}^{k+1}\left(q_{k+1}\right)=p_{k+1}+\frac{\partial p_{k+1}}{\partial q_{k+1}} \cdot\left[q_{k+1}-D_{2} H_{\mathrm{d}}^{+}\left(q_{k}, p_{k+1}\right)\right]
=p_{k+1} -f_d^+(q_k,q_{k+1}), \label{derive_H-J_2}
\end{equation}
where on the last step we have used the first of the forced right discrete Hamilton equations \eqref{right_discrete_Hamilton}. Analogously, writing the discrete action in terms of the left discrete Hamiltonian and using the forced left discrete Hamilton equations, we can write
\begin{equation}
  DS_d^k(q_k)=p_k+f_d^-(q_k,q_{k+1}).
\end{equation}
Let us introduce the mappings
% Let $\gamma$ be the mapping
\begin{subequations}
\begin{equation}
\begin{aligned} 
    \gamma^+ = DS_d\circ \pi_2 + f_d^+: Q\times Q &\to T^*Q \\
        (q_j,q_{j+1}) &\mapsto (q_{j+1}, p_{j+1}), \label{definition_gamma_plus}
\end{aligned}
\end{equation}
and
\begin{equation}
\begin{aligned} 
    \gamma^- = DS_d\circ \pi_1 - f_d^-: Q\times Q &\to T^*Q \\
        (q_j,q_{j+1}) &\mapsto (q_{j}, p_{j}). \label{definition_gamma_minus}
\end{aligned}
\end{equation}
% \begin{flalign}
%    \gamma^+ = DS_d\circ \pi_2 + f_d^+: Q\times Q &\to T^*Q \nonumber\\
%         (q_j,q_{j+1}) &\mapsto (q_{j+1}, p_{j+1}), \label{definition_gamma_plus}\\
%    \gamma^- = DS_d\circ \pi_1 - f_d^-: Q\times Q &\to T^*Q \nonumber\\
%         (q_j,q_{j+1}) &\mapsto (q_{j}, p_{j}) \label{definition_gamma_minus},
% \end{flalign}
\label{definitions_gamma_plus_minus}
\end{subequations}
% where $\pi_1,\pi_2:Q\times Q\to Q$ are the natural projections given by $\pi_1(q_i,q_j)=q_i$ and $\pi_2(q_i,q_j)=q_j$. 
% \todo{\tiny Do I need this assumption?}
% Assume that $Q$ is a an open subset of $\RR^n$.
Consider the bundle isomorphism
\begin{equation}
\begin{aligned}
  \Phi: T^*(Q\times Q)&\to T^*Q\times T^*Q\\
  (q_i,q_j,p_i,p_j)&\mapsto (q_i,p_i,q_j,p_j).
\end{aligned}
\end{equation}
Let us define the discrete section $\gamma$ given by
\begin{equation}
\begin{aligned} 
  \gamma:Q\times Q&\to T^*(Q\times Q)  \\
  (q_j,q_{j+1})
 &\mapsto \Phi^{-1}\left( \gamma^-(q_j,q_{j+1}),\gamma^+(q_j,q_{j+1})  \right)=
 (q_j, q_{j+1}, p_j, p_{j+1}).
\end{aligned}
\end{equation}
% Then the discrete Hamiltonian flow can be written as
% \begin{equation}
% \mathcal{F}_{d}^{H}:
% \left( q_j, \gamma(q_{j-1},q_j)  \right)
% \mapsto
% \left( q_{j+1}, \gamma(q_{j},q_{j+1})  \right).
% % \left(q_j, DS_d^j(q_j)+ f_d^+(q_{j-1},q_{j})   \right)
% % \mapsto 
% % \left(q_{j+1}, DS_d^{j+1}(q_{j+1})  +f_d^+(q_j,q_{j+1})  \right).
% \end{equation}
We can now define the mappings 
\begin{equation}
\begin{aligned}
  \left(\mathcal{F}_d^H   \right)^{\gamma^+}
  =\pi_Q\circ \mathcal{F}_d^H \circ \gamma^+: Q\times Q&\to Q\\
  (q_{j-1},q_{j}) 
  &\mapsto  
  % \pi_Q \circ \mathcal{F}_d^H  (\gamma^+(q_{j},q_{j+1}))=
  q_{j+1},\\
   \left(\mathcal{F}_d^H   \right)^{\gamma^-}
  =\pi_Q\circ \mathcal{F}_d^H \circ \gamma^-: Q\times Q&\to Q\\
  (q_{j-1},q_{j}) 
  &\mapsto  q_{j}.
\end{aligned}
\end{equation}
Consider the flows given by
\begin{equation}
\begin{aligned} 
  \left(\mathcal{F}_d^H   \right)^{\gamma}
  % = \left( \left(\mathcal{F}_d^H   \right)^{\gamma^+},\left(\mathcal{F}_d^H   \right)^{\gamma^-}  \right)
  : Q\times Q&\to Q\times Q\\
    (q_{j-1},q_{j})  
    &\mapsto  \left( \left(\mathcal{F}_d^H   \right)^{\gamma^-} (q_{j-1},q_{j}),
      \left(\mathcal{F}_d^H   \right)^{\gamma^+}  (q_{j-1},q_{j})  \right)
     = (q_j,q_{j+1}),  
\end{aligned}
\end{equation}
and
\begin{equation}
\begin{aligned}
  \tilde{\mathcal{F}}_d^H
  % =\left( \mathcal{F}_d^H,\mathcal{F}_d^H  \right)
  :T^*(Q\times Q)&\to T^*(Q\times Q)\\
  \tilde{\mathcal{F}}_d^H(q_{j-1},q_j,p_{j-1},p_j)
  &= \Phi^{-1} \left( \mathcal{F}_d^H\circ \pi_- \circ \Phi (q_{j-1},q_j,p_{j-1},p_j),
     \mathcal{F}_d^H\circ \pi_+ \circ \Phi (q_{j-1},q_j,p_{j-1},p_j) \right)\\
  &=(q_j,q_{j+1},p_{j},p_{j+1}),
  % (q_{j-1},q_j,p_{j-1},p_j)
  % &\mapsto \Phi^{-1} \left( \mathcal{F}_d^H\circ \pi_- \circ \Phi (q_{j-1},q_j,p_{j-1},p_j),
  %    \mathcal{F}_d^H\circ \pi_+ \circ \Phi (q_{j-1},q_j,p_{j-1},p_j) \right)\\
  % &=(q_j,q_{j+1},p_{j},p_{j+1}),
\end{aligned}
\end{equation}
where $\pi_-,\pi_+:T^*Q\times T^*Q\to T^*Q$ are the projections on the first and on the second factor of $T^*Q$, respectively.
It can be easily checked that
\begin{equation}
  \left( \mathcal{F}_d^H  \right)^\gamma
  =\pi_{Q\times Q}\circ \tilde{\mathcal{F}}_d^H \circ \gamma,
  \label{eq_commutative_diagram_flows}
\end{equation}
or, in other words, the following diagram commutes:
\begin{center}
% https://tikzcd.yichuanshen.de/#N4Igdg9gJgpgziAXAbVABwnAlgFyxMJZABgBpiBdUkANwEMAbAVxiRABUA9AKgAoBFADqC8AW3gACfgEoQAX1LpMufIRQBmclVqMWbLnyEis4uFNkKl2PASJkATNvrNWiEEbGT+8xSAzXVIk1Hamc9Nw8TL3ltGCgAc3giUAAzACcIUSR7ahwIJDIdFzZheLpRUToQagY6ACMYBgAFZRs1EDSseIALHB9UjKzEQrykAEZQ3VcQYTwGWGBhSpxugGNGYAAxOTkAfShOAAl+kHTM7Nz8xE0i8JBeJboV9YYtvYPD6U5S8sqTs6GExAo2uk2KbmEaCwu2AkVMUgkcmqIAaYCgSHUhTg3SwKT6iAAtGNLKdBgVLhcUTA0UgCZiwXdIdDYbMomZ+EiavVGi0ArY3J0en0SQCMRTEECwtMfhUqnIKHIgA
\begin{tikzcd}
T^*(Q\times Q) \arrow[rrr, "\tilde{\mathcal{F}}_d^H"] \arrow[dd, "\pi_{Q\times Q}"', bend right] &  &  & T^*(Q\times Q) \arrow[dd, "\pi_{Q\times Q }", bend left] \\
                                                                                                 &  &  &                                                                      \\
Q\times Q \arrow[uu, "\gamma"'] \arrow[rrr, "(\mathcal{F}_d^H)^\gamma"]                          &  &  & Q\times Q \arrow[uu, "\gamma"]                                      
\end{tikzcd}
\end{center}
The pointwise interpretation of the diagram above is

\begin{center}
% https://tikzcd.yichuanshen.de/#N4Igdg9gJgpgziAXAbVABwnAlgFyxMJZABgBpiBdUkANwEMAbAVxiRAAoBHAfWACsAtAEYAvqR59SaXoNFTufAJQgx6TLnyEUAZnJVajFmy4LxMgNRzpk6f0sjlqkBmx4CRMgCZ99Zq0QcPPzCYhKOpGqumkS63tS+RgEm-KEWoo76MFAA5vBEoABmAE4QALZIntQ4EEhkBn5sADqN2XSlpXQg1Ax0AEYwDAAK6m5aIEVY2QAWOCoRIMVltVU1iELxhv4gzXgMsMDNHThTAMaMwABiIiLcUAB6ABJzhSXliJUg1Ui69YnbjUdTucrrc7s1Wu1Ok5Fm91p9Vj8EltmmgsLwAIo7LCleAAAnRuJEXRA-TAUG+dTgUywBVmiGI0Ney3hFWopPJiAE2jqSKajVRGKxOLg+KJ3T6A2GUXcAQm01mjKWiB+XzWGwaAXBbQ6KgoIiAA
\begin{tikzcd}
{(q_{j-1},q_j,p_{j-1},p_j)} \arrow[rrr, "\tilde{\mathcal{F}}_d^H"] \arrow[dd, "\pi_{Q\times Q}"', bend right] &  &  & {(q_j,q_{j+1},p_j,p_{j+1})} \arrow[dd, "\pi_{Q\times Q }", bend left] \\
                                                                                                              &  &  &                                                                       \\
{(q_{j-1},q_j)} \arrow[uu, "\gamma"'] \arrow[rrr, "(\mathcal{F}_d^H)^\gamma"]                                     &  &  & {(q_{j},q_{j+1})} \arrow[uu, "\gamma"]                               
\end{tikzcd}

\end{center}

\begin{proposition}
% \todo[inline]{Define $(\mathcal{F}_d^{H})^{DS_d}$}
The flows $(\mathcal{F}_d^{H})^{\gamma}$ and $\tilde{\mathcal{F}}_d^{H}$ are $\gamma$-related if the equations
\begin{subequations}
\begin{flalign}
  & S_{d}^{j+1}\left(q_{j+1}\right)-S_{d}^{j}\left(q_{j}\right)
  -\gamma^+ (q_j,q_{j+1}) q_{j+1}
  +H_{d}^{+}\left(q_{j}, \gamma^+ (q_j,q_{j+1})\right)=0, \label{forced_right_discrete_H-J}
  \\
  & S_{d}^{j+1}\left(q_{j+1}\right)-S_{d}^{j}\left(q_{j}\right)
  +\gamma^- (q_j,q_{j+1}) q_{j}
  +H_{d}^{-}\left(q_{j+1}, \gamma^- (q_j,q_{j+1})\right)=0, \label{forced_left_discrete_H-J}
\end{flalign}
\label{forced_discrete_H-J}
\end{subequations}
are satisfied. We shall call these equations the forced right discrete Hamilton-Jacobi equation (FRDHJ) and the forced left discrete Hamilton-Jacobi equation (FLDHJ), respectively.
\end{proposition}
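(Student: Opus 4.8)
The plan is to establish the defining identity of $\gamma$-relatedness, $\gamma\circ(\mathcal{F}_d^H)^\gamma=\tilde{\mathcal{F}}_d^H\circ\gamma$, by reducing it to a fibrewise check and then invoking the Hamilton--Jacobi equations, much as in the un-forced theory. Since $\gamma$ is a section of $\pi_{Q\times Q}$, the already-established relation \eqref{eq_commutative_diagram_flows} reads $\pi_{Q\times Q}\circ\tilde{\mathcal{F}}_d^H\circ\gamma=(\mathcal{F}_d^H)^\gamma=\pi_{Q\times Q}\circ\gamma\circ(\mathcal{F}_d^H)^\gamma$, so $\tilde{\mathcal{F}}_d^H\circ\gamma$ and $\gamma\circ(\mathcal{F}_d^H)^\gamma$ agree after projecting to $Q\times Q$; equivalently, $\gamma$-relatedness is the statement that $\tilde{\mathcal{F}}_d^H$ maps the image of $\gamma$ into itself. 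It therefore suffices to fix $(q_{j-1},q_j)$, set $(q_j,q_{j+1})=(\mathcal{F}_d^H)^\gamma(q_{j-1},q_j)$, and show that the two covectors of $\tilde{\mathcal{F}}_d^H(\gamma(q_{j-1},q_j))$, which sit over $q_j$ and over $q_{j+1}$, coincide with those of $\gamma(q_j,q_{j+1})$.

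Next I would write both sides out in coordinates. By the definition of $\gamma$ via $\gamma^\pm$ and \eqref{definition_gamma_minus}--\eqref{definition_gamma_plus}, the covectors of $\gamma(q_j,q_{j+1})$ over $q_j$ and over $q_{j+1}$ are $DS_d(q_j)-f_d^-(q_j,q_{j+1})$ and $DS_d(q_{j+1})+f_d^+(q_j,q_{j+1})$ (the step index on $S_d$ being understood). On the other side, unwinding $\tilde{\mathcal{F}}_d^H$ through $\Phi$ and the projections $\pi_\pm$, then $\mathcal{F}_d^H=\Legp\circ(\Legm)^{-1}$, and finally the explicit forms of the forced discrete Legendre transforms, the two covectors of $\tilde{\mathcal{F}}_d^H(\gamma(q_{j-1},q_j))$ come out as the momenta that the discrete Hamiltonian flow actually propagates, namely $D_2L_d(q_{j-1},q_j)+f_d^+(q_{j-1},q_j)$ over $q_j$ and $D_2L_d(q_j,q_{j+1})+f_d^+(q_j,q_{j+1})$ over $q_{j+1}$. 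Comparing the two descriptions turns the required equality into relations between $DS_d$ and the partial derivatives of $L_d$ --- precisely the relations identified in the derivation of \eqref{derive_H-J_2} as the condition that $\gamma^\pm$ carry the solution momenta of the flow.

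Those relations I would obtain by differentiating the Hamilton--Jacobi equations. Differentiating FRDHJ \eqref{forced_right_discrete_H-J} with respect to $q_{j+1}$, inserting the definition of $\gamma^+$ and the Legendre identities $D_2H_d^+(q_j,D_2L_d(q_j,q'))=q'$ and $D_1H_d^+(q_j,D_2L_d(q_j,q'))=-D_1L_d(q_j,q')$, and simplifying, reproduces the forced right discrete Hamilton equation \eqref{right_discrete_Hamilton_a} with $\gamma^+$ in the role of $p_{j+1}$; combined with the fact that $q_{j+1}$ is the configuration produced by $(\Legm)^{-1}$, this forces the covector over $q_{j+1}$ to match. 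Symmetrically, differentiating FLDHJ \eqref{forced_left_discrete_H-J} with respect to $q_j$ and using the corresponding identities for $H_d^-$ reproduces \eqref{left_discrete_Hamilton_a} with $\gamma^-$ in the role of $p_j$, and forces the covector over $q_j$ to match. With both covectors matched, $\tilde{\mathcal{F}}_d^H\circ\gamma=\gamma\circ(\mathcal{F}_d^H)^\gamma$, i.e.\ the two flows are $\gamma$-related.

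I expect the only genuine difficulty to lie in the bookkeeping: one must keep track of the two Legendre transforms $\Legp$ and $\Legm$, of the identification $\Phi$ with its projections $\pi_\pm$, and of the one-step index shift, and in particular must remember that $\mathcal{F}_d^H$ propagates the \emph{forced} momentum $D_2L_d+f_d^+$ rather than the bare $D_2L_d$ that enters the Legendre relation for $H_d^+$ --- this is the very mismatch responsible for the $f_d^\pm$ correction terms in \eqref{right_discrete_Hamilton} and \eqref{left_discrete_Hamilton}, and the differentiated Hamilton--Jacobi equations are tailored precisely to cancel it. No conceptually new ingredient beyond the geometry already in place should be needed.
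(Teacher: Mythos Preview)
Your proposal is correct and follows essentially the same route as the paper: reduce $\gamma$-relatedness to a fibrewise check on covectors (using \eqref{eq_commutative_diagram_flows} and the fact that $\gamma$ is a section), compute $\tilde{\mathcal{F}}_d^H\circ\gamma$ explicitly through the forced Legendre transforms to obtain the momenta $D_2L_d+f_d^+$, and then verify these match $\gamma^\pm$ using the Hamilton--Jacobi equations. The only minor difference is in that last verification: the paper does it by direct algebraic substitution (rewriting the condition $\gamma^+ = D_2L_d+f_d^+$ via FRDHJ and the definition of $H_d^+$, then invoking \eqref{definition_gamma_plus}; and analogously with FLDHJ and \eqref{definition_gamma_minus}), whereas you propose to \emph{differentiate} the Hamilton--Jacobi equations to recover \eqref{right_discrete_Hamilton_a} and \eqref{left_discrete_Hamilton_a} with $\gamma^\pm$ in the role of the momenta. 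That differentiation technique is precisely what the paper deploys in the proof of the subsequent Lemma~\ref{lemma_discrete_HJ}, not here; but it works equally well for this proposition and does not constitute a genuinely different strategy.
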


\begin{proof}
Eq.~\eqref{forced_right_discrete_H-J} follows immediately from Eqs.~\eqref{derive_H-J_1} and \eqref{derive_H-J_2}. Similarly, we can show Eq.~\eqref{forced_left_discrete_H-J}.

On the other hand, Eq.~\eqref{eq_commutative_diagram_flows} is equivalent to saying that, at any point, 
\begin{equation}
  \tilde{\mathcal{F}}_d^H \circ \gamma\left( q_{j-1},q_j  \right)
  % =\left( (\mathcal{F}_d^H)^\gamma\left( q_{j-1},q_j  \right), \gamma \circ (\mathcal{F}_d^H)^\gamma \left( q_{j-1},q_j  \right)   \right)
  =\left( q_{j},q_{j+1}, \gamma\left( q_{j},q_{j+1}  \right)  \right).
\end{equation}
We have that
\begin{equation}\begin{aligned}
  \tilde{\mathcal{F}}_d^H \circ \gamma\left( q_{j-1},q_j  \right)
  &=\tilde{\mathcal{F}}_d^H \left( q_{j-1},q_j, p_{j-1},p_j  \right)\\
  &=\Phi^{-1} \left( {\mathcal{F}}_d^H(q_{j-1},p_{j-1}),
                    {\mathcal{F}}_d^H(q_{j},p_{j})  \right)\\
  &=\Phi^{-1} \left( \left( q_{j}, D_2 L_d (q_{j-1}, q_{j}) + f_d^+ (q_{j-1}, q_{j})  \right),\right.\\ &\qquad \quad \left.
                    \left( q_{j+1}, D_2 L_d (q_{j}, q_{j+1}) + f_d^+ (q_{j}, q_{j+1})  \right)
              \right)\\
  &= \left( q_{j}, q_{j+1}, D_2 L_d (q_{j-1}, q_{j}) + f_d^+ (q_{j-1}, q_{j}),
          D_2 L_d (q_{j}, q_{j+1}) + f_d^+ (q_{j}, q_{j+1}) \right),
\end{aligned}\end{equation}
% and
% \begin{equation}
% \begin{aligned} 
%   \gamma \circ (\mathcal{F}_d^H)^\gamma (q_{j-1},q_j) 
%   = \left((q_{j}, q_{j+1}),   \gamma (q_{j}, q_{j+1})  \right).
% \end{aligned}
% \end{equation}
% The commutativity of the diagram implies that
The diagram above commutes if and only if
\begin{equation}
\begin{aligned} 
  \gamma (q_{j}, q_{j+1}) = \left( D_2 L_d (q_{j-1}, q_{j}) + f_d^+ (q_{j-1}, q_{j}),
          D_2 L_d (q_{j}, q_{j+1}) + f_d^+ (q_{j}, q_{j+1})  \right),
\end{aligned}
\end{equation}
that is,
\begin{equation}
\begin{aligned} 
   & \gamma^- (q_{j}, q_{j+1}) = D_2 L_d (q_{j-1}, q_{j}) + f_d^+ (q_{j-1}, q_{j}),\\
   & \gamma^+ (q_{j}, q_{j+1}) = D_2 L_d (q_{j}, q_{j+1}) + f_d^+ (q_{j}, q_{j+1}).
\end{aligned}
\end{equation}
By Eq.~\eqref{forced_right_discrete_H-J}, this is equivalent to
\begin{equation}
   \left[ D_2 L_d (q_{j}, q_{j+1}) + f_d^+ (q_{j}, q_{j+1}) \right] q_{j+1}
  = H_{d}^{+}\left(q_{j}, p_{j+1}\right) + S_{d}^{j+1}\left(q_{j+1}\right)-S_{d}^{j}\left(q_{j}\right),
\end{equation}
where 
\begin{equation}
  p_{j+1} = \gamma^+ (q_j,q_{j+1}),
\end{equation}
which is true by using definition \eqref{definition_gamma_plus}. Similarly, by Eq.~\eqref{forced_left_discrete_H-J}, we have that
\begin{equation}
  \left[ D_2 L_d (q_{j-1}, q_{j}) + f_d^+ (q_{j-1}, q_{j})  \right]
  q_j = S_d^j(q_j) - S_d^{j+1}(q_{j+1}) - H_d^-(q_{j+1}, p_j),
\end{equation}
where 
\begin{equation}
  p_{j} = \gamma^- (q_j,q_{j+1}),
\end{equation}
which is true by using definition \eqref{definition_gamma_minus}.

% Clearly,
% \begin{equation}
%   \mathcal{F}_d^H \circ \gamma^+(q_{j-1},q_j)
%   = \mathcal{F}_d^H(q_j,p_j)
%   = (q_{j+1},p_{j+1}),
% \end{equation}
% and
% \begin{equation}
%   \gamma^+ \left(q_j,
%               (\mathcal{F}_d^{H})^{DS_d}(q_{j-1},q_{j})    \right)
%   = \gamma^+(q_{j},q_{j+1})=(q_{j+1},p_{j+1}).
% \end{equation}

\end{proof}

\newcommand*{\flowp}{\mathcal{F}^+}
\newcommand*{\flowm}{\mathcal{F}^-}

\begin{lemma}\label{lemma_discrete_HJ}
% [Relation FRDHJ and Hamilton's eqs]
Suppose that $S_d^k$ and $\gamma^+$ satisfy the FRDHJ \eqref{forced_right_discrete_H-J}, and let $\left\{c_k  \right\}_{k=0}^N\subset Q$ be a sequence of points such that
\begin{equation}
  % (c_k,c_{k+1}) = (\mathcal{F}_d^H)^\gamma(c_{k-1}, c_k).
  c_{k+1} = \flowp (c_{k-1}, c_k), 
  \label{c_k_hat_flowp}
\end{equation}
where $\flowp:Q\times Q\to Q$ is implicitly defined by
\begin{equation}
\begin{aligned}
 \flowp(q_{k-1},q_k) =& D_2 H_d^+ \left( q_k, \gamma^+(q_k, \flowp(q_{k-1},q_k))  \right)
 -f_d^+ \left( q_k,  \flowp(q_{k-1},q_k) \right) \\
 &\times\left[ D_2\gamma^+(q_k, \flowp(q_{k-1},q_k))   \right]^{-1}
 \label{implicit_definition_flowp}
\end{aligned}
\end{equation}
Then, the sequence of points $\left\{(c_k,p_k)  \right\}_{k=0}^N\subset T^*Q$ with
\begin{equation}
  p_{k+1} = \gamma^+(c_{k},c_{k+1})
  % p_k = \gamma^- (c_k,c_{k+1}) \label{momentum_gamma}
\end{equation}
is a solution of the forced right discrete Hamilton equations \eqref{right_discrete_Hamilton}.
%%%%%%%%%%%%

Similarly, suppose that $S_d^k$ and $\gamma^-$ satisfy the FLDHJ \eqref{forced_left_discrete_H-J}, and let $\left\{c_k  \right\}_{k=0}^N\subset Q$ be a sequence of points such that
\begin{equation}
  % (c_k,c_{k+1}) = (\mathcal{F}_d^H)^\gamma(c_{k-1}, c_k).
  c_{k+1} = \flowm (c_{k-1}, c_k), 
  \label{c_k_hat_flowm}
\end{equation}
where $\flowm:Q\times Q\to Q$ is implicitly defined by
\begin{equation}
\begin{aligned}
 q_k &= 
 f_d^- \left( q_k,  \flowm(q_{k-1},q_k) \right) 
 \left[ D_{q_k} \gamma^-(q_k, \flowm(q_{k-1},q_k))   \right]^{-1}\\
 &\ - D_2 H_d^- \left( \flowm(q_{k-1},q_k), \gamma^-(q_k, \flowm(q_{k-1},q_k))  \right)
 \label{implicit_definition_flowm}
\end{aligned}
\end{equation}
Then, the sequence of points $\left\{(c_k,p_k)  \right\}_{k=0}^N\subset T^*Q$ with
\begin{equation}
  % p_{k} = \gamma^-(c_{k-1},c_k)
  (p_k, p_{k+1}) = \gamma(c_{k},c_{k+1})
\end{equation}
is a solution of the forced left discrete Hamilton equations \eqref{left_discrete_Hamilton}.
% Moreover, $\flowm=\left(\mathcal{F}_d^H  \right)^{\gamma^+}$.
\end{lemma}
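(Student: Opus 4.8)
The plan is to verify directly that the sequence $\{(c_k, p_k)\}$ satisfies the two forced right discrete Hamilton equations \eqref{right_discrete_Hamilton}, and then to do the analogous computation for the left case. Two ingredients will be used throughout: first, the fact (established in the proof of the previous proposition) that differentiating the FRDHJ \eqref{forced_right_discrete_H-J} with respect to $q_{j+1}$ recovers $D S_d^{j+1}(q_{j+1}) = p_{j+1} - f_d^+(q_j, q_{j+1})$, i.e.\ exactly Eq.~\eqref{derive_H-J_2}; and second, the definition \eqref{definition_gamma_plus} of $\gamma^+$, which says $p_{k+1} = \gamma^+(c_k, c_{k+1}) = D S_d(c_{k+1}) + f_d^+(c_k, c_{k+1})$.

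First I would establish Eq.~\eqref{right_discrete_Hamilton_b}. Since $\{c_k\}$ satisfies $c_{k+1} = \flowp(c_{k-1}, c_k)$, I can substitute into the implicit definition \eqref{implicit_definition_flowp} to obtain a relation among $c_{k-1}, c_k, c_{k+1}$ and $p_{k+1} = \gamma^+(c_k, c_{k+1})$. Rearranging \eqref{implicit_definition_flowp} multiplied through by $D_2\gamma^+(c_k, c_{k+1})$ gives precisely the first forced right discrete Hamilton equation \eqref{right_discrete_Hamilton_a} with $q_{j+1} = c_{k+1}$ and $p_{j+1} = p_{k+1}$; so \eqref{right_discrete_Hamilton_a} is essentially built into the hypothesis. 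For \eqref{right_discrete_Hamilton_b}, I would differentiate the FRDHJ at two consecutive indices. Writing the FRDHJ at index $j$ and at index $j-1$, subtracting, and differentiating with respect to the shared variable $q_j$ (using $D S_d^j(q_j) = p_j + f_d^-(q_j, q_{j+1})$ from the FLDHJ-type relation, or more directly the relation $D S_d^{j}(q_j) = p_j^- $ coming from the left discrete Hamilton structure), one isolates $p_j$ in terms of $D_1 H_d^+(q_j, p_{j+1})$ and $f_d^-(q_j, q_{j+1})$, which is \eqref{right_discrete_Hamilton_b}. The key algebraic point is that the $S_d^j(q_j)$ terms appearing in the FRDHJ at index $j$ and the FRDHJ at index $j-1$ cancel upon differentiation, leaving only the Hamiltonian and force contributions.

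Next I would check consistency: one must make sure that the momenta $p_k$ defined via $p_{k+1} = \gamma^+(c_k, c_{k+1})$ are the \emph{same} as the momenta $p_k$ one would get from the left Legendre transform, i.e.\ that the pair $(p_k, p_{k+1}) = \gamma(c_k, c_{k+1})$ is well-defined and the two descriptions agree. This is where the bundle isomorphism $\Phi$ and the commuting-diagram conclusion \eqref{eq_commutative_diagram_flows} of the preceding proposition do the work: $\gamma$-relatedness of $(\mathcal F_d^H)^\gamma$ and $\tilde{\mathcal F}_d^H$ is exactly the statement that applying the discrete Hamiltonian flow to $\gamma(c_{k-1}, c_k)$ yields $\gamma(c_k, c_{k+1})$, and unwinding this componentwise gives both $p_k = \gamma^-(c_k, c_{k+1})$ and $p_{k+1} = \gamma^+(c_k, c_{k+1})$ consistently. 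The left-hand case is then entirely symmetric: replace the FRDHJ by the FLDHJ \eqref{forced_left_discrete_H-J}, use $D S_d^k(q_k) = p_k + f_d^-(q_k, q_{k+1})$, substitute $c_{k+1} = \flowm(c_{k-1}, c_k)$ into \eqref{implicit_definition_flowm} to recover \eqref{left_discrete_Hamilton_a}, and differentiate the FLDHJ at consecutive indices to recover \eqref{left_discrete_Hamilton_b}.

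I expect the main obstacle to be bookkeeping rather than conceptual: keeping straight which argument of which function is being differentiated, and correctly handling the implicitly-defined maps $\flowp$ and $\flowm$, whose derivatives appear through the inverse factors $[D_2 \gamma^+]^{-1}$ and $[D_{q_k}\gamma^-]^{-1}$. In particular, when I substitute $c_{k+1} = \flowp(c_{k-1}, c_k)$ I must be careful that $\gamma^+(c_k, c_{k+1})$ is evaluated at the already-determined point $c_{k+1}$, so that the factor $D_2 \gamma^+(c_k, \flowp(c_{k-1}, c_k))$ in \eqref{implicit_definition_flowp} is genuinely the partial derivative with respect to the second slot evaluated along the solution; conflating this with a total derivative would produce spurious extra terms. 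Once the chain-rule applications are organized correctly, the verification of \eqref{right_discrete_Hamilton_a}--\eqref{right_discrete_Hamilton_b} and \eqref{left_discrete_Hamilton_a}--\eqref{left_discrete_Hamilton_b} reduces to the cancellation of the discrete-action terms, which is immediate from the telescoping structure of $S_d^k$.
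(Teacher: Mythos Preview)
Your outline correctly identifies that Eq.~\eqref{right_discrete_Hamilton_a} is encoded in the implicit definition \eqref{implicit_definition_flowp} of $\flowp$ (and similarly for \eqref{left_discrete_Hamilton_a} and $\flowm$); this matches the paper. The gap is in your treatment of \eqref{right_discrete_Hamilton_b}. The paper does \emph{not} write the FRDHJ at two consecutive indices and subtract. Instead, it substitutes $q_{k+1}=\flowp(q_{k-1},q_k)$ into the single FRDHJ \eqref{forced_right_discrete_H-J} and then takes the \emph{total} derivative with respect to $q_k$, so that the chain-rule picks up the dependence of $q_{k+1}$ on $q_k$. After using $DS_d^{k+1}=\gamma^+-f_d^+$ (definition \eqref{definition_gamma_plus}) and the implicit definition \eqref{implicit_definition_flowp}, the terms containing $D_2\flowp$ and $D_{q_k}\gamma^+$ collapse, leaving $D_1H_d^+(c_k,p_{k+1})-DS_d^k(c_k)=0$; one then invokes the \emph{definition} \eqref{definition_gamma_minus} of $\gamma^-$ (not the FLDHJ) to rewrite $DS_d^k(c_k)=p_k+f_d^-(c_k,c_{k+1})$, which is \eqref{right_discrete_Hamilton_b}. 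Your proposed route of differentiating FRDHJ at indices $j$ and $j-1$ with $q_{j\pm1}$ held fixed leaves a residual term proportional to $[D_2H_d^+-q_{j+1}]\,D_1\gamma^+=f_d^+[D_2\gamma^+]^{-1}D_1f_d^+$, which does not vanish in general; the cancellation you expect of the $S_d^j(q_j)$ terms is not the operative mechanism.

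A second issue is your appeal to the commuting diagram of the preceding proposition to establish consistency of the two momentum assignments. That proposition uses both the FRDHJ and the FLDHJ simultaneously, whereas the present lemma assumes only one of them at a time. In the paper's argument no such consistency check is needed: once \eqref{right_discrete_Hamilton_b} is obtained as above, the momentum $p_k$ is simply read off from the definition of $\gamma^-$, and the two Hamilton equations follow directly. The left-hand part is then handled by the symmetric computation with $\flowm$ substituted into \eqref{forced_left_discrete_H-J} and differentiation with respect to $q_k$.
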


\begin{proof}
Replacing $q_{k+1}$ by $\flowp(q_{k-1},q_k)$ in Eq.~\eqref{forced_right_discrete_H-J}, we can write
\begin{equation}
\begin{aligned} 
  S_{d}^{k+1}\left(\flowp(q_{k-1}, q_k)\right)
  &-S_{d}^{k}\left(q_{k}\right)
  -\gamma^+ (q_k,\flowp(q_{k-1}, q_k)) \flowp(q_{k-1}, q_k)\\
  &+H_{d}^{+}\left(q_{k}, \gamma^+ (q_k,\flowp(q_{k-1}, q_k))\right)=0, 
\end{aligned}
\end{equation}
which derived with respect to $q_k$ yields
\begin{equation}
\begin{aligned}
  DS_d^{k+1} &\left( \flowp(q_{k-1}, q_k)  \right)  D_2 \flowp(q_{k-1}, q_k) 
  -DS_d^k (q_k)
  -D_{q_k} \gamma^+ \left( q_k, \flowp(q_{k-1}, q_k)   \right) \flowp(q_{k-1}, q_k) \\
  &- \gamma^+ \left( q_k, \flowp(q_{k-1}, q_k)   \right) D_2 \flowp(q_{k-1}, q_k) 
  + D_1 H_d^+ \left( q_k, \gamma^+(q_k, \flowp(q_{k-1}, q_k) )  \right)\\
  &+ D_2 H_d^+ \left( q_k, \gamma^+(q_k, \flowp(q_{k-1}, q_k) )  \right)
      D_{q_k} \gamma^+ \left( q_k, \flowp(q_{k-1}, q_k)   \right) = 0,
\end{aligned}
\end{equation}
which, due to Eqs.~\eqref{definition_gamma_plus} and \eqref{implicit_definition_flowp}, reduces to
\begin{equation}
\begin{aligned}
  -f_d^+ & \left(q_k, \flowp(q_{k-1}, q_k)  \right)  D_2 \flowp(q_{k-1}, q_k)
  -DS_d^k (q_k)\\
 & +f_d^+ \left(q_k, \flowp(q_{k-1}, q_k)  \right) D_{q_k} \gamma^+ \left( q_k, \flowp(q_{k-1}, q_k)   \right) \left[ D_2\gamma^+(q_k, \flowp(q_{k-1},q_k))   \right]^{-1}\\
 & + D_1 H_d^+ \left( q_k, \gamma^+(q_k, \flowp(q_{k-1}, q_k) )  \right) = 0.
\end{aligned}
\end{equation}
In particular, for $q_k=c_k$, we have
\begin{equation}
\begin{aligned}
  -f_d^+ &\left(c_k, \flowp(c_{k-1}, c_k)  \right)  D_2 \flowp(c_{k-1}, c_k)
  -DS_d^k (c_k)\\
 & +f_d^+ \left(c_k, \flowp(c_{k-1}, c_k)  \right) D_{c_k} \gamma^+ \left( c_k, \flowp(c_{k-1}, c_k)   \right) \left[ D_2\gamma^+(c_k, \flowp(c_{k-1},c_k))   \right]^{-1}\\
 & + D_1 H_d^+ \left( c_k, \gamma^+(c_k, \flowp(c_{k-1}, c_k) )  \right) = 0.
\end{aligned}
\end{equation}
In other words,
\begin{equation}
    -f_d^+ \left(c_k, c_{k+1}  \right)   \frac{\partial c_{k+1}} {\partial  c_{k}}
  -DS_d^k (c_k)
  +f_d^+ \left(c_k, c_{k+1}  \right) \frac{\partial p_{k+1}} {\partial  c_k} \left( \frac{\partial p_{k+1}} {\partial  c_{k+1}}  \right)^{-1}
  + D_1 H_d^+ \left( c_k, p_{k+1}  \right) = 0,  
\end{equation}
that is,
\begin{equation}
   D_1 H_d^+ \left( c_k, p_{k+1}  \right) -DS_d^k (c_k) = 0.
\end{equation}
By Eq.~\eqref{definition_gamma_minus}, we have that
\begin{equation}
  D_1 H_d^+ \left( c_k, p_{k+1}  \right) - p_k - f_d^-(c_k, c_{k+1}) = 0.
  \label{eq_Hamilton_proof_HJ_theorem}
\end{equation}
This last equation, together with the definition of $\flowp$, show that the sequence $\left\{(c_k,p_k)  \right\}$ satisfies the forced right discrete Hamilton equations.

We shall now prove the second assertion. The FLDHJ \eqref{forced_left_discrete_H-J} can be written as
\begin{equation}
\begin{aligned}
    S_d^{k+1} \left(\flowm(q_{k-1},q_k)  \right)& - S_d^k(q_k)
    + \gamma^- \left(q_k, \flowm(q_{k-1},q_k)  \right) q_k\\
    &+ H_d^- \left(\flowm(q_{k-1},q_k), \gamma^- \left(q_k, \flowm(q_{k-1},q_k)  \right)  \right) = 0,
\end{aligned}
\end{equation}
whose derivative with respect to $q_k$ is
\begin{equation}
\begin{aligned}
    DS_d^{k+1} & \left(\flowm(q_{k-1},q_k)  \right) D_2\flowm(q_{k-1},q_k)
    - DS_d^k(q_k)
    + D_{q_k} \gamma^- \left(q_k, \flowm(q_{k-1},q_k)  \right) q_k\\
    &+ \gamma^- \left(q_k, \flowm(q_{k-1},q_k)  \right) 
    + D_1 H_d^- \left(\flowm(q_{k-1},q_k), \gamma^- \left(q_k, \flowm(q_{k-1},q_k)  \right)  \right) 
        D_2 \flowm(q_{k-1},q_k)\\
    &+ D_2 H_d^- \left(\flowm(q_{k-1},q_k), \gamma^- \left(q_k, \flowm(q_{k-1},q_k)  \right)  \right) 
        D_{q_k} \gamma^- \left(q_k, \flowm(q_{k-1},q_k) \right)
    = 0.
\end{aligned}
\end{equation}
By Eqs.~\eqref{definitions_gamma_plus_minus} and \eqref{implicit_definition_flowm}, the last equation reduces to
\begin{equation}
    \left[
        DS_d^{k+1} \left(\flowm(q_{k-1},q_k)  \right) 
        + D_1 H_d^- \left(\flowm(q_{k-1},q_k), \gamma^- \left(q_k, \flowm(q_{k-1},q_k)  \right)  \right) 
    \right] D_2\flowm(q_{k-1},q_k) = 0,
\end{equation}
so
\begin{equation}
    \gamma^+ \left(q_k, \flowm(q_{k-1},q_k)   \right)
    - f_d^+ \left(q_k, \flowm(q_{k-1},q_k)   \right)
    + D_1 H_d^- \left(\flowm(q_{k-1},q_k), \gamma^- \left(q_k, \flowm(q_{k-1},q_k)  \right) \right)
    =0.
\end{equation}
In particular, for $q_k=c_k$, we have 
\begin{equation}
    \gamma^+ \left(c_k, \flowm(c_{k-1},c_k)   \right)
    - f_d^+ \left(c_k, \flowm(c_{k-1},c_k)   \right)
    + D_1 H_d^- \left(\flowm(c_{k-1},c_k), \gamma^- \left(c_k, \flowm(c_{k-1},c_k)  \right) \right)
    =0,
\end{equation}
that is,
\begin{equation}
    p_{k+1} - f_d^+(c_k, c_{k+1}) + D_1 H_d^- (c_{k+1}, p_k)=0,
\end{equation}
which, together with the definition of $\flowm$ prove that the forced left discrete Hamilton equations are satisfied. 
\end{proof}

\begin{theorem}\label{theorem_discrete_HJ}

%%%%%%%%%%%%%%%%%%%

Suppose that $S_d^k$ and $\gamma^-$ satisfy the FLDHJ \eqref{forced_left_discrete_H-J}, and let $\left\{c_k  \right\}_{k=0}^N\subset Q$ be a sequence of points such that
\begin{equation}
  % (c_k,c_{k+1}) = (\mathcal{F}_d^H)^\gamma(c_{k-1}, c_k).
  c_{k+1} = \left(\mathcal{F}_d^H  \right)^{\gamma^+} (c_{k-1}, c_k), 
  % \label{c_k_hat_flowm}
\end{equation}
Then, the sequence of points $\left\{(c_k,p_k)  \right\}_{k=0}^N\subset T^*Q$ with
\begin{equation}
  % p_{k} = \gamma^-(c_{k-1},c_k)
  (p_k, p_{k+1}) = \gamma(c_{k},c_{k+1})
\end{equation}
is a solution of the forced left discrete Hamilton equations \eqref{left_discrete_Hamilton}.

%%%%%%%%%%%%%
Let us define the mapping
\begin{equation}
    \mathcal{F}^{\leftarrow} = \pi_Q \circ \Legm \circ \left(\Legp  \right)^{-1} \circ \gamma^-: (q_{j}, q_{j+1}) \mapsto q_{j-1},
\end{equation}
which goes backwards in time. Then, if $S_d^k$ and $\gamma^+$ satisfy the FRDHJ, and $\left\{c_k  \right\}_{k=0}^N\subset Q$ is given by
\begin{equation}
    c_{k-1} = \mathcal{F}^{\leftarrow} (c_k, c_{k+1}),
\end{equation}
then the sequence $\left\{(c_k,p_k)  \right\}_{k=0}^N\subset T^*Q$ with
\begin{equation}
   (p_k, p_{k+1}) = \gamma(c_{k},c_{k+1})
\end{equation}
is a solution of the right discrete Hamilton equations \eqref{right_discrete_Hamilton}.
\end{theorem}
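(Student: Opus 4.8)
The plan is to deduce both assertions from Lemma~\ref{lemma_discrete_HJ} by showing that the concrete flows $(\mathcal{F}_d^H)^{\gamma^+}$ and $\mathcal{F}^{\leftarrow}$ appearing here generate exactly the same sequences as the implicitly defined flows $\flowm$ and $\flowp$ of that lemma, once the relevant forced discrete Hamilton--Jacobi equation is imposed. For the first assertion I would start by unwinding $(\mathcal{F}_d^H)^{\gamma^+}=\pi_Q\circ\mathcal{F}_d^H\circ\gamma^+$: writing $\mathcal{F}_d^H=\Legp\circ(\Legm)^{-1}$ and substituting the explicit expressions for the forced discrete Legendre transforms, one sees that $c_{k+1}=(\mathcal{F}_d^H)^{\gamma^+}(c_{k-1},c_k)$ is characterised by a left-Legendre-type relation among $c_k$, $c_{k+1}$ and $DS_d^k(c_k)$, where along the sequence $\gamma^+(c_{k-1},c_k)=(c_k,p_k)$ with $p_k=DS_d^k(c_k)+f_d^+(c_{k-1},c_k)$. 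I would then substitute $q_{k+1}=(\mathcal{F}_d^H)^{\gamma^+}(q_{k-1},q_k)$ into the FLDHJ~\eqref{forced_left_discrete_H-J} and differentiate with respect to $q_k$ --- exactly as in the proof of Lemma~\ref{lemma_discrete_HJ}, but with this flow in place of $\flowm$ --- using $H_d^-(q_{k+1},p_k)=-p_kq_k-L_d(q_k,q_{k+1})$ and the definition \eqref{definition_gamma_minus} of $\gamma^-$; this should collapse to the implicit equation \eqref{implicit_definition_flowm} holding along $\{c_k\}$. Lemma~\ref{lemma_discrete_HJ} then delivers that $\{(c_k,p_k)\}$ with $(p_k,p_{k+1})=\gamma(c_k,c_{k+1})$ solves the forced left discrete Hamilton equations~\eqref{left_discrete_Hamilton}.

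The second assertion is the time-reversed mirror image. The map $\mathcal{F}^{\leftarrow}=\pi_Q\circ\Legm\circ(\Legp)^{-1}\circ\gamma^-$ sends $(c_k,c_{k+1})$ to the unique $c_{k-1}$ for which $\gamma^-(c_k,c_{k+1})$ is the right forced Legendre image of $(c_{k-1},c_k)$. I would unwind this, using $H_d^+(q_k,p_{k+1})=p_{k+1}q_{k+1}-L_d(q_k,q_{k+1})$, and then check --- with the help of the FRDHJ~\eqref{forced_right_discrete_H-J} differentiated along the flow, as in Lemma~\ref{lemma_discrete_HJ} --- that any sequence with $c_{k-1}=\mathcal{F}^{\leftarrow}(c_k,c_{k+1})$ satisfies the implicit equation \eqref{implicit_definition_flowp}, i.e.\ is generated forward by $\flowp$. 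Lemma~\ref{lemma_discrete_HJ} then yields that $\{(c_k,p_k)\}$ with $(p_k,p_{k+1})=\gamma(c_k,c_{k+1})$ solves the forced right discrete Hamilton equations~\eqref{right_discrete_Hamilton}. Alternatively, one can extract this from the commutative square \eqref{eq_commutative_diagram_flows}, retaining only the portion of its proof that relies solely on the FRDHJ.

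The step I expect to be the real obstacle is the bookkeeping around momenta: three a priori distinct objects all get called $p_k$ here --- $\gamma^-(c_k,c_{k+1})$, $\gamma^+(c_{k-1},c_k)$, and the left (respectively right) forced Legendre image of a consecutive pair --- and the content of the argument is precisely that these coincide along the generated sequence, which is where the Hamilton--Jacobi equation enters, through its derivative along the flow (mirroring Eqs.~\eqref{derive_H-J_1}--\eqref{derive_H-J_2} and the computation in the proof of Lemma~\ref{lemma_discrete_HJ}). A secondary technical point is that the inversions $(\Legp)^{-1}$, $(\Legm)^{-1}$ and $[D_{q_k}\gamma^{\pm}]^{-1}$ used throughout must be legitimate; this holds wherever $L_d$ is (hyper)regular and the pertinent Hamilton--Jacobi equation is satisfied, i.e.\ under the standing hypotheses of Lemma~\ref{lemma_discrete_HJ}.
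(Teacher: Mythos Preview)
Your overall strategy matches the paper's: identify $(\mathcal{F}_d^H)^{\gamma^+}$ with the implicitly defined $\flowm$ of Lemma~\ref{lemma_discrete_HJ} (and dually $\mathcal{F}^{\leftarrow}$ with $\flowp$), then invoke the lemma. The paper does precisely this, but more directly than you propose. Unwinding the Legendre transforms (your step~1) already shows that $c_{k+1}=(\mathcal{F}_d^H)^{\gamma^+}(c_{k-1},c_k)$ means $\gamma^+(c_{k-1},c_k)=\Legm(c_k,c_{k+1})$; the paper then observes, via the identity $D_1 L_d(q_j,q_{j+1})=-p_j-[q_j+D_2H_d^-(q_{j+1},p_j)]\,\partial p_j/\partial q_j$, that this Legendre relation is \emph{equivalent} to Eq.~\eqref{left_discrete_Hamilton_a}, which rewritten is exactly \eqref{implicit_definition_flowm}. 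Hence $(\mathcal{F}_d^H)^{\gamma^+}=\flowm$ and the lemma applies immediately. Your step~2---substituting the flow into the FLDHJ and differentiating---is therefore unnecessary at this point, and as stated runs in the wrong direction: that computation is the content of Lemma~\ref{lemma_discrete_HJ} itself, where it takes \eqref{implicit_definition_flowm} as \emph{input} and outputs the second Hamilton equation \eqref{left_discrete_Hamilton_b}, rather than producing \eqref{implicit_definition_flowm}. If you carried it out you would simply be reproving the lemma with the concrete flow substituted in. The same remarks apply verbatim to the backward map $\mathcal{F}^{\leftarrow}$ and $\flowp$; the paper dispatches that case with ``the proof of the converse is analogous.''
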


\begin{proof}%[Proof of Theorem \ref{theorem_discrete_HJ}]

Observe that
\begin{equation}
    D_1 L_d(q_j, q_{j+1}) = -p_j - \left[q_j + D_2 H_d^-(q_{j+1}, p_j) \right] \frac{\partial p_j} {\partial q_j},
\end{equation}
so $\Legm(q_j,q_{j+1})=(q_j, p_{j})$ 
if and only if Eq.~\eqref{left_discrete_Hamilton_a} is satisfied. 
 If that is the case, we deduce that
\begin{equation}
    % \left(\mathcal{F}_d^H  \right)^{\gamma^+} (q_j, q_{j+1})
     \pi_Q \circ \mathcal{F}_d^H (q_j, p_j)
     = \pi_Q \circ \Legp \circ \left(\Legm  \right)^{-1} (q_j, p_j) 
     = q_{j+1}.
\end{equation}
If we now take $(p_j, p_{j+1})=\gamma(q_{j}, q_{j+1})$, then Eqs.~\eqref{right_discrete_Hamilton_a} and \eqref{left_discrete_Hamilton_a} are satisfied by construction,
 and we obtain 
\begin{equation}
   \left(\mathcal{F}_d^H  \right)^{\gamma^+} (q_{j-1}, q_{j}) = q_{j+1}.
\end{equation}
Clearly, Eq.~\eqref{implicit_definition_flowm} holds for $\left(\mathcal{F}_d^H  \right)^{\gamma^+}=\flowm$.
% % Hence,
% \begin{equation}
% \begin{aligned}
%   & f_d^- \left( q_k,  \flowm(q_{k-1},q_k) \right) 
%   \left[ D_{q_k} \gamma^-(q_k, \flowm(q_{k-1},q_k))   \right]^{-1}\\
%   &\quad - D_2 H_d^- \left( \flowm(q_{k-1},q_k), \gamma^-(q_k, \flowm(q_{k-1},q_k))  \right) 
%   % =  f_d^- \left( q_k,  q_{k+1} \right) 
%   % \left[ D_{q_k} \gamma^-(q_k, q_{k+1})   \right]^{-1}
%   % - D_2 H_d^- \left( q_{k+1}, \gamma^-(q_k, q_{k+1})  \right)  \\
%   % =  f_d^- \left( q_k,  q_{k+1} \right) 
%   % \left[ D_{q_k} p_k  \right]^{-1}
%   % - D_2 H_d^- \left( q_{k+1}, p_k \right)  
%   = q_k.
% \end{aligned}
% \end{equation}
The proof of the converse is analogous.
\end{proof}

\subsection{Hamilton-Jacobi theory for discrete Rayleigh systems}
% \begin{remark}[Hamilton-Jacobi theory for discrete Rayleigh systems]

If the discrete force is Rayleigh, we have that
\begin{equation}
\begin{aligned}
    &\gamma^- = DS_d\circ \pi_1 - D_1 \mathcal{R}_d,
    &\gamma^+ = DS_d\circ \pi_2 - D_2 \mathcal{R}_d.
\end{aligned}
\end{equation}
Let us introduce the function
\begin{equation}
\begin{aligned}
    \tilde{S}_d:Q\times Q &\to \RR\\
    (q_i, q_j) & \mapsto S^i(q_i) + S^j(q_j),
\end{aligned}
\end{equation}
so that
\begin{equation}
    \gamma = D \tilde{S}_d - D \mathcal{R}_d.
\end{equation}
Moreover, we can define $G_d= \tilde{S}_d - \mathcal{R}_d$, and write
\begin{equation}
    \gamma = D G_d.
\end{equation}
Then
\begin{equation}
\begin{aligned}
    \gamma^- = D_1 G_d,\\
    \gamma^+ = D_2 G_d,  
\end{aligned}
\end{equation}
so the FRDHJ and FLDHJ can be written as
\begin{subequations}
\begin{equation}
    S_{d}^{j+1}\left(q_{j+1}\right)-S_{d}^{j}\left(q_{j}\right)
  -D_2 G_d (q_j,q_{j+1}) q_{j+1}
  +H_{d}^{+}\left(q_{j}, D_2 G_d (q_j,q_{j+1})\right)=0,
\end{equation}
and
\begin{equation}
    S_{d}^{j+1}\left(q_{j+1}\right)-S_{d}^{j}\left(q_{j}\right)
  +D_1 G_d (q_j,q_{j+1}) q_{j}
  +H_{d}^{-}\left(q_{j+1}, D_1 G_d (q_j,q_{j+1})\right)=0, \label{forced_left_discrete_H-J_Rayleigh}
\end{equation}
\end{subequations}
respectively.

The second statement of Theorem \ref{theorem_discrete_HJ} can be particularized for discrete Rayleigh systems as follows.

\begin{corollary}
% Assume that the first of the left discrete Hamilton equations,
% \begin{equation}
%    \left[  q_{j} +D_{2} H_{d}^{-}\left(q_{j+1}, D_1 G_d (q_j, q_{j+1})\right)  \right] 
%   D_1^2 \tilde{L}_d (q_j, q_{j+1})
%    =  D_1 \tilde{R}_d (q_j, q_{j+1}) ,
%    \label{LDH_1_Rayleigh}
% \end{equation}
% holds, and let
Let
\begin{equation}
    \flowm = \pi_Q \circ \mathcal{F}_d^H \circ D_2 G_d.
\end{equation}
Suppose that $S_d$ and $G_d$ satisfy the FLDHJ \eqref{forced_left_discrete_H-J_Rayleigh}, and let $\left\{c_k  \right\}_{k=0}^N\subset Q$ be a sequence of points such that
\begin{equation}
  % (c_k,c_{k+1}) = (\mathcal{F}_d^H)^\gamma(c_{k-1}, c_k).
  c_{k+1} = \flowm (c_{k-1}, c_k).
\end{equation}
% and 
% \begin{equation}
%    \left[  c_{j} +D_{2} H_{d}^{-}\left(c_{j+1}, D_1 G_d(c_j, c_{j+1})\right)  \right] 
%   D_1^2 G_d (c_j, c_{j+1})
%    =  D_1 \mathcal{R}_d (c_j, c_{j+1}) .
%    \label{LDH_1_Rayleigh}
% \end{equation}
Then, the sequence of points $\left\{(c_k,p_k)  \right\}_{k=0}^N\subset T^*Q$ with
\begin{equation}
  % p_{k} = \gamma^-(c_{k-1},c_k)
  (p_k, p_{k+1}) = DG_d(c_{k},c_{k+1})
\end{equation}
is a solution of the forced left discrete Hamilton equations. 
\end{corollary}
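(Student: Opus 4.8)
The plan is to obtain this corollary directly from the part of Theorem~\ref{theorem_discrete_HJ} that deals with the FLDHJ, by specialising every object to the Rayleigh setting; essentially no new computation is needed, only a translation of notation.

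First I would record the Rayleigh dictionary. Since $f_d^+ = -D_2\mathcal R_d$ and $f_d^- = D_1\mathcal R_d$, the definitions \eqref{definitions_gamma_plus_minus} of $\gamma^\pm$ together with $G_d = \tilde S_d - \mathcal R_d$ give $\gamma^+ = DS_d\circ\pi_2 - D_2\mathcal R_d = D_2 G_d$ and $\gamma^- = DS_d\circ\pi_1 - D_1\mathcal R_d = D_1 G_d$, that is, $\gamma = DG_d$, exactly as noted just before the corollary. Under this substitution the FLDHJ \eqref{forced_left_discrete_H-J} for the pair $(S_d,\gamma^-)$ becomes verbatim the equation \eqref{forced_left_discrete_H-J_Rayleigh} for the pair $(S_d, G_d)$, so the hypothesis that $S_d$ and $G_d$ satisfy \eqref{forced_left_discrete_H-J_Rayleigh} is literally the hypothesis of Theorem~\ref{theorem_discrete_HJ} that $S_d^k$ and $\gamma^-$ satisfy \eqref{forced_left_discrete_H-J}.

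Next I would identify the map: by the definition given in the statement, $\flowm = \pi_Q\circ\mathcal F_d^H\circ D_2 G_d = \pi_Q\circ\mathcal F_d^H\circ\gamma^+ = \left(\mathcal F_d^H\right)^{\gamma^+}$, so a sequence $\{c_k\}$ with $c_{k+1}=\flowm(c_{k-1},c_k)$ is exactly the sequence featuring in Theorem~\ref{theorem_discrete_HJ}. Applying that theorem then yields that $\{(c_k,p_k)\}$ with $(p_k,p_{k+1})=\gamma(c_k,c_{k+1})$ solves the forced left discrete Hamilton equations \eqref{left_discrete_Hamilton}; substituting $\gamma=DG_d$ and rewriting \eqref{left_discrete_Hamilton} in the Rayleigh form \eqref{left_discrete_Hamilton_Rayleigh} gives precisely the asserted conclusion, with $(p_k,p_{k+1})=DG_d(c_k,c_{k+1})$.

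I do not expect a genuine obstacle here: the whole argument is bookkeeping of notation. The one point that must be checked is that the map called $\flowm$ in the corollary coincides with the map called $\flowm$ in Lemma~\ref{lemma_discrete_HJ} (the one defined implicitly by \eqref{implicit_definition_flowm}); this is exactly the identity observed at the end of the proof of Theorem~\ref{theorem_discrete_HJ}, namely that \eqref{implicit_definition_flowm} holds for $\left(\mathcal F_d^H\right)^{\gamma^+}=\flowm$. If a self-contained argument were preferred, one would instead rerun the FLDHJ half of the proof of Lemma~\ref{lemma_discrete_HJ} with $\gamma^\pm$ replaced throughout by $D_1 G_d,\, D_2 G_d$ and $f_d^\pm$ by $\mp D_2\mathcal R_d,\, D_1\mathcal R_d$; the differentiation with respect to $q_k$ then collapses more quickly because $\gamma=DG_d$ is exact, and one lands directly on the Rayleigh form of the forced left discrete Hamilton equations.
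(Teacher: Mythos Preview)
Your proposal is correct and matches the paper's approach exactly: the paper presents this corollary as a direct particularisation of Theorem~\ref{theorem_discrete_HJ} to the Rayleigh setting, with no separate proof, and your translation of notation ($\gamma^\pm = D_{1,2}G_d$, $\flowm = (\mathcal F_d^H)^{\gamma^+}$) is precisely the intended bookkeeping. Your additional remark that the identification of $\flowm$ with the implicitly defined map of Lemma~\ref{lemma_discrete_HJ} is handled at the end of the proof of Theorem~\ref{theorem_discrete_HJ} is also apt.
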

% \end{remark}

\begin{example}

\begin{figure}[h!]
  \centering
  \begin{subfigure}[t]{.45\linewidth}
    \centering
    \includegraphics[width=\linewidth]{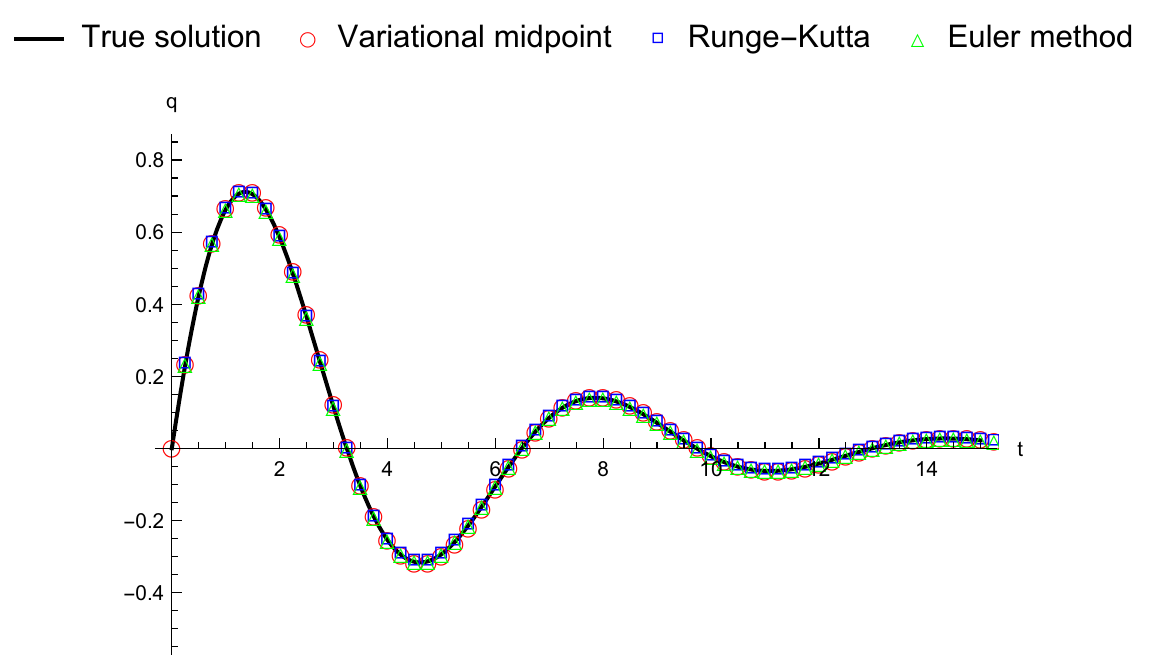}
    \caption{Trajectory}
    \label{plot_positions}
  \end{subfigure}
  \begin{subfigure}[t]{.45\linewidth}
    \centering
    \includegraphics[width=\linewidth]{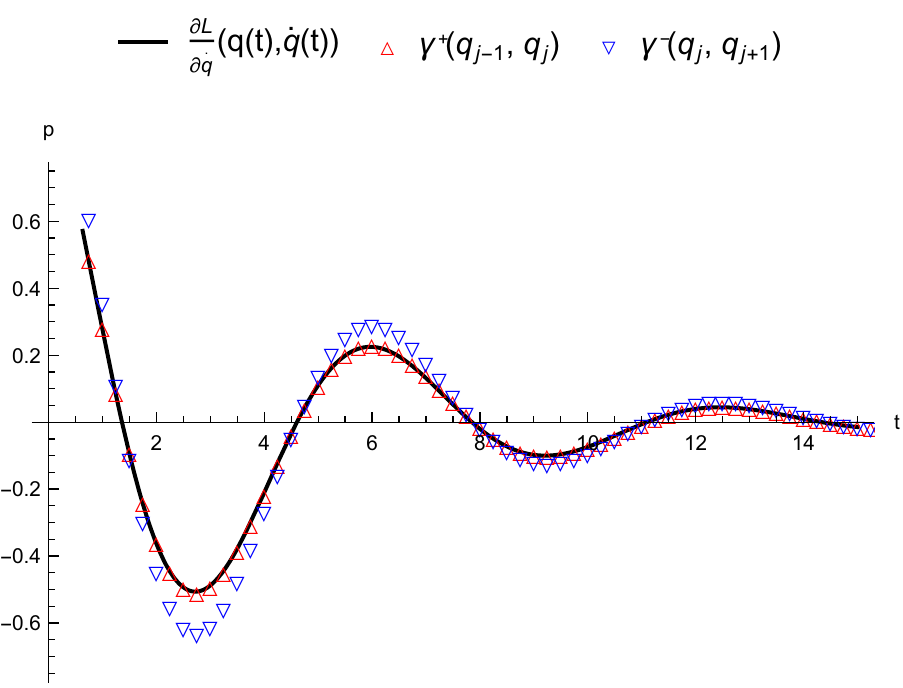}
    \caption{Momenta}
    \label{plot_momenta}
  \end{subfigure}
  \begin{subfigure}[t]{.45\linewidth}
    \centering
    \includegraphics[width=\linewidth]{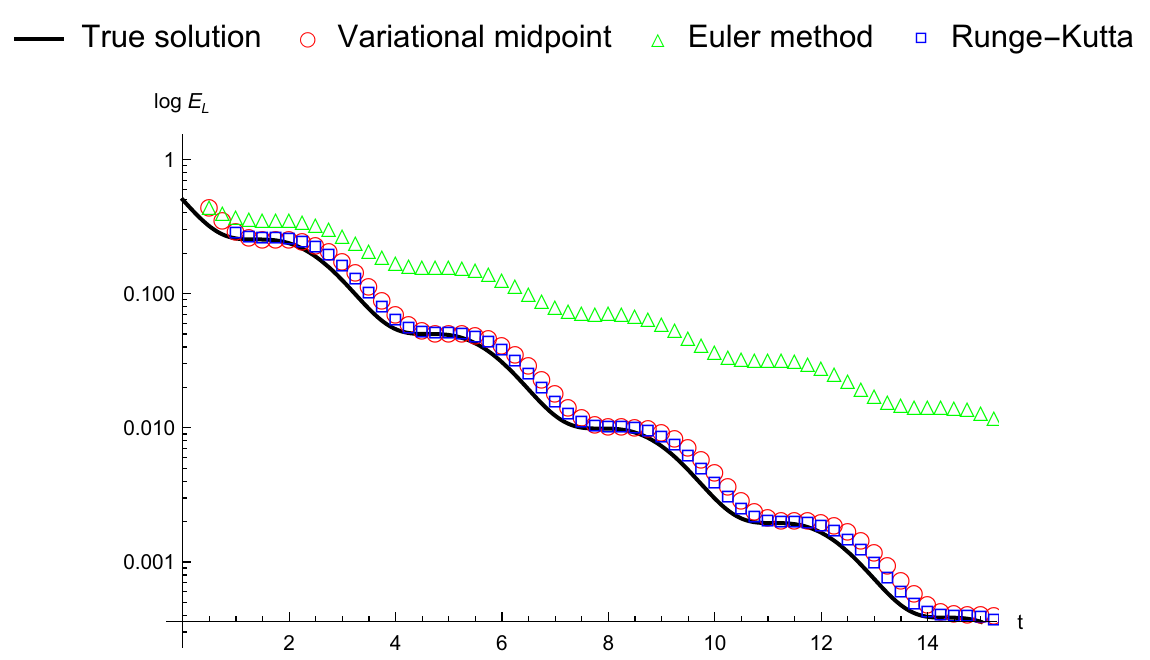}
    \caption{Energy}
    \label{plot_energy}
  \end{subfigure}
  \begin{subfigure}[t]{.45\linewidth}
    \centering
    \includegraphics[width=\linewidth]{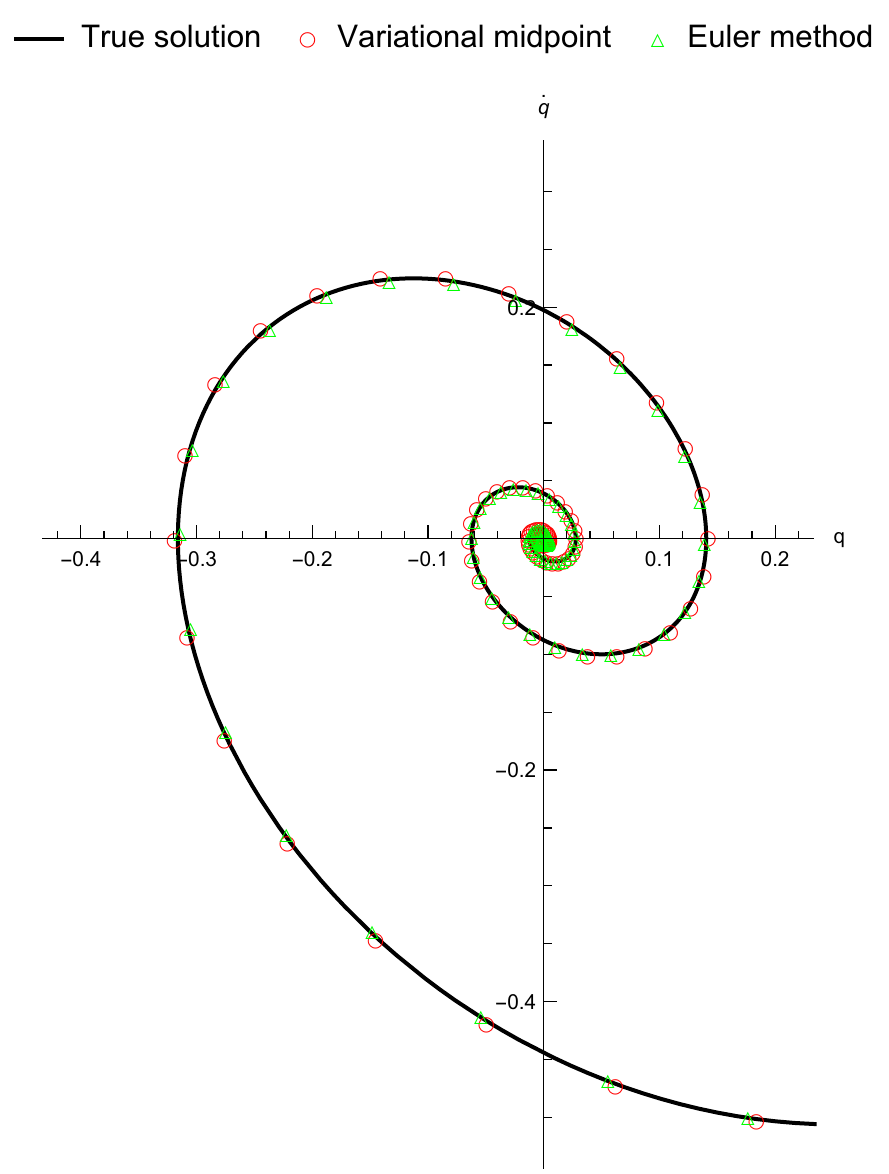}
    \caption{Space of velocities}
    \label{plot_phase_space}
  \end{subfigure}
  \caption{Harmonic oscillator subject to a linear Rayleigh dissipation.
  The true curves, given by the solution of the continuous forced Euler-Lagrange equation, are compared with the ones obtained from the variational midpoint rule and the Euler and Runge-Kutta methods.
   % The curves given by the solution of the continuous forced Euler-Lagrange equation are compared with the points given by the solutions of the forced discrete Euler-Lagrange equations, computed with the discrete Lagrangian and the discrete Rayleigh potential given by the midpoint rule approximation.
  }
  \label{plot_harmonic_oscillator}
\end{figure}

Consider the discrete Rayleigh system $(L_d, \mathcal{R}_d)$ on $\RR\times \RR$, where
\begin{equation}
  L_d (q_{j-1}, q_{j}) = \frac{h}{2} m \left(\frac{q_{j}-q_{j-1}}{h}  \right)^2 - \frac{h}{2}k \left(\frac{q_{j-1}+q_j}{2}  \right)^2
  ,
\end{equation}
and
\begin{equation}
  \mathcal{R}_d (q_{j-1}, q_{j}) =  r  \left(\frac{q_{j}-q_{j-1}}{2}  \right)^2.
\end{equation}
The corresponding discrete forces are
\begin{equation}
  f_d^+(q_{j-1}, q_{j}) = f_d^-(q_{j-1}, q_{j}) = -r \frac{q_{j}-q_{j-1}}{2}.
\end{equation}
% \todo[inline, color=red!40]{Correct equations! Interchange $L_d^+ \leftrightarrow L_d^-$}
% \todo[inline]{Faltaría comparar el orden del lagrangiano discreto y las fuerzas discretas}
Clearly, this system corresponds to the harmonic oscillator with a Rayleigh force (see Example \ref{example_harmonic_oscillator}) discretized via the midpoint rule. The modified discrete Lagrangians are 
\begin{equation}
  L_d^\pm (q_{j-1},q_{j}) = \frac{h}{2} m  \left(\frac{q_{j}-q_{j-1}}{h}  \right)^2 
  -\frac{h}{2} k  \left(\frac{q_{j}+q_{j-1}}{2}  \right)^2
  \pm  r  \left(\frac{q_{j}-q_{j-1}}{2}  \right)^2.
\end{equation}
The discrete Legendre transforms are given by
\begin{equation}
\begin{aligned}
  \Legp: (q_{j-1}, q_{j}) &\mapsto \left(q_{j}, \frac{m}{h}(q_{j}-q_{j-1}) - \frac{1}{4} kh (q_{j}+q_{j-1}) -r \frac{q_{j}-q_{j-1}}{2} \right),  \\
  \Legm: (q_{j-1}, q_{j}) &\mapsto \left(q_{j-1}, \frac{m-r}{h}(q_{j}-q_{j-1}) + \frac{1}{4} kh (q_{j}+q_{j-1}) + r \frac{q_{j}-q_{j-1}}{2} \right),
\end{aligned}
\end{equation}
and thus the discrete Hamiltonian flow is
\begin{equation}
  \mathcal{F}_{d}^{H}: (q_{j}, p_{j})  \mapsto 
  \left(\frac{h^2 (-k) q_j+4 h p_j+2 h q_j r+4 m q_j}{h^2 k+2 h r+4 m},-\frac{h^2 k p_j+4 h k m q_j+2 h p_j r-4 m p_j}{h^2 k+2 h r+4 m} \right).
\end{equation}
On the other hand, the discrete action is
\begin{equation}
  S_d^j (q_j) = \sum_{k=0}^{j-1} \left[\frac{h}{2} m \left(\frac{q_{k+1}-q_k}{h}  \right)^2
  - \frac{h}{2}k \left(\frac{q_{j+1}+q_{j}}{2}  \right)^2
  \right],
\end{equation}
so
% \begin{equation}
%   \tilde{S}_d(q_j, q_{j+1}) = \sum_{k=0}^{j-2} m \left(\frac{q_{k+1}-q_k}{h}  \right)^2
%   + m \left(\frac{q_{j}-q_{j-1}}{h}  \right)^2
%   +\frac{1}{2} m \left(\frac{q_{j+1}-q_j}{h}  \right)^2,
% \end{equation}
% and
\begin{equation}
\begin{aligned}
  G_d(q_j, q_{j+1}) 
 &= \sum_{k=0}^{j-2} \left[\frac{h}{2} m \left(\frac{q_{k+1}-q_k}{h}  \right)^2
  - \frac{h}{2}k \left(\frac{q_{j+1}+q_{j}}{2}  \right)^2
  \right]\\
  &+ m h \left(\frac{q_{j}-q_{j-1}}{h}  \right)^2
  - kh \left(\frac{q_{j}+q_{j-1}}{2}  \right)^2\\
 & +\frac{h}{2} m \left(\frac{q_{j+1}-q_j}{h}  \right)^2
  -\frac{h}{2} k \left(\frac{q_{j+1}+q_j}{2}  \right)^2
  -r  \left(\frac{q_{j+1}-q_{j}}{2}  \right)^2.
\end{aligned}
\end{equation}
Then
\begin{equation}
  \gamma^-   (q_j, q_{j+1}) 
  % = D_{1}G_d (q_j, q_{j+1}) 
  =  \left(\frac{m}{h} - \frac{3}{4} kh \right) q_j -  \left(\frac{m}{h} - \frac{1}{2} kh\right) q_{j+1} - \left(\frac{2m}{h} + \frac{kh}{2}\right) q_{j-1} +  r  \frac{q_{j+1}-q_{j}}{2} ,
\end{equation}
and
\begin{equation}
  % \gamma^+ (q_{j}, q_{j+1}) = D_{j+1} G_d(q_{j}, q_{j+1})  = (m+r) (q_{j+1}-q_{j}),
  \gamma^+ (q_j, q_{j+1}) 
  % = D_2 G_d (q_j, q_{j+1}) 
  = \frac{m}{h} (q_{j+1} - q_j)
  -\frac{1}{4}kh (q_j + q_{j+1})
  -r  \frac{q_{j+1}-q_{j}}{2}.
\end{equation}
 Therefore
\begin{equation}
  \flowm : (q_{j-1}, q_{j}) \mapsto
  \frac{h^2 (-k) (q_{j-1}+2 q_{j})+2 h q_{j-1} r-4 m (q_{j-1}-2 q_{j})}{h^2 k+2 h r+4 m}
   \label{example_flowm}
\end{equation}
The left discrete Hamiltonian is
\begin{equation}
  H_d^-(q_{j+1}, p_j) = -p_j q_j - \frac{h}{2} m \left(\frac{q_{j+1} - q_{j}}{h}  \right)^2 + \frac{h}{2}k \left(\frac{q_{j}+q_{j+1}}{2}  \right)^2.
\end{equation}
By construction, $S_d^j$ and $\gamma^-$ satisfy the FLDHJ \eqref{forced_left_discrete_H-J}. 
Then, the sequence of points $\left\{(c_k,p_k)  \right\}_{k=0}^N$, with $c_{j+1}=\flowm(c_{j-1}, c_j)$ and
\begin{equation}
  p_j = \gamma^-(c_j, c_{j+1}) 
  =  \left(\frac{m}{h} - \frac{3}{4} kh \right) c_j -  \left(\frac{m}{h} - \frac{1}{2} kh\right) c_{j+1} - \left(\frac{2m}{h} + \frac{kh}{2}\right) c_{j-1} +  r  \frac{c_{j+1}-c_{j}}{2} ,
\end{equation}
is a solution of the forced left discrete Hamilton equations for $(H_d^-, f_d)$. Observe that the points given by $c_{j+1}=\flowm(c_{j-1}, c_j)$ coincide with the solutions of the forced discrete Euler-Lagrange equations \eqref{discrete_forced_EL_Rayleigh}.

In Figure \ref{plot_harmonic_oscillator} we have plotted the position, momenta and energy of the system as a function of time, as well as the velocities as a function of the positions; comparing the solutions of the continuous forced Euler-Lagrange equations with the points given by the flow $\flowm$ (or, equivalently, the forced discrete Euler-Lagrange equations) as well as the Euler and fourth-order Runge-Kutta methods. However, this simple linear and 1-dimensional system does not show the advantages of variational methods over the standard numerical methods (see Example \ref{example_Marsden_West}).
\end{example}

\section{Conclusions and outlook}\label{section_conclusion}
In this paper we have defined a discrete analogue of the Rayleigh potential, $\mathcal{R}_d$. This has allowed us to write the Euler-Lagrange equations and the Legendre transformations of a forced discrete Rayleigh system $(L_d, \mathcal{R}_d)$ in terms of the modified discrete Lagrangians $L_d^\pm$ instead of the discrete Lagrangian $L_d$ and the discrete Rayleigh force $f_d$. We have studied the equivalence between discrete Rayleigh systems. Moreover, we have obtained a Noether's theorem for forced discrete Lagrangian systems, generalizing a result previously obtained by Marsden and West \cite{marsden_discrete_2001}; as well as other theorem which, given the Lie algebra of symmetries of the discrete Lagrangian, characterizes the Lie subalgebra of symmetries of the discrete force. Furthermore, we have developed a Hamilton-Jacobi theory for forced discrete Hamiltonian systems, based on the discrete flow approach.

This paper has continued our geometric study of mechanical systems with external forces, initiated with two previous papers. In the first paper we obtained a Noether's theorem and a symplectic reduction method for forced mechanical systems. 
% studied the symmetries, constants of the motion and reduction of forced mechanical systems.
% (see Ref.~\cite{de_leon_symmetries_2021}, see also Ref.~\cite{lopez-gordon_geometry_2021}). 
In this paper we have obtained the discrete counterparts of our Noether's theorem for forced mechanical systems, 
% \cite[Theorem 1]{de_leon_symmetries_2021},
and of our reduction lemma.
% \cite[Lemma 15]{de_leon_symmetries_2021}.

In the second paper, %\cite{de_leon_geometric_2022}, 
we developed a Hamilton-Jacobi theory for forced Hamiltonian and Lagrangian systems and characterized the complete solutions, relating them with constants of the motion in involution. We also studied the reduction and reconstruction of solutions of the Hamilton-Jacobi problem for systems with symmetry, as well as the reduction of the Hamilton-Jacobi problem for a \v{C}aplygin system to the Hamilton-Jacobi problem of a forced Lagrangian system.

This paper has left some open questions to be studied elsewhere. To start with, it would be interesting to obtain sufficient conditions for an exact discrete force to be Rayleigh. In fact, it seems that natural Lagrangian subject to continuous Rayleigh forces define an exact discrete force which is Rayleigh (see Conjecture \ref{conjecture_Rayleigh}). It would also be quite useful to obtain expressions for discrete Rayleigh potentials with other variational integrators (as we have done for the midpoint rule). An alternative approach for the construction of a discrete Hamilton-Jacobi theory is the discrete vector field approach \cite{de_leon_geometry_2018,cresson_continuous_2014,cresson_continuous_2015}. Moreover, we could explore if the results we have obtained for forced continuous systems, such as other types of symmetries or complete solutions of the Hamilton-Jacobi problem, have a discrete version. Another possible line of research is the development of a reduction method for forced discrete systems \cite{fernandez_lagrangian_2020,fernandez_lagrangian_2016,jalnapurkar_discrete_2006}, or the study of forced discrete systems on Lie groupoids \cite{marrero_discrete_2006}.
Our Hamilton-Jacobi theory for forced discrete systems could have applications in optimal control, as the conservative one does \cite{lee_optimal_2014}. In addition, one could study the case in which the forced system is a ``small perturbation'' of the conservative system \cite{hairer_invariant_1999}, both in the continuous and discrete versions.

\section*{Acknowledgements}

We are thankful for the suggestions of the referee, which have remarkably improved the clarity of the text.
The authors acknowledge financial support from the Spanish Ministry of Science and Innovation (MICINN), under grants PID2019-106715GB-C21 and ``Severo Ochoa Programme for Centres of Excellence in R\&D'' (CEX2019-000904-S). Manuel Lainz wishes to thank MICINN and the Institute of Mathematical Sciences (ICMAT) for the FPI-Severo Ochoa predoctoral contract PRE2018-083203. Asier López-Gordón would like to thank MICINN and ICMAT for the predoctoral contract PRE2020-093814.

\section*{Data Availability}
The data that support the findings of this study are openly available at the following URL/DOI:
\url{https://github.com/aslogor/discrete_Hamilton_Jacobi_scripts}
% (omitted for anonymity)
% The data that supports the findings of this study are available within the article.
% Data sharing is not applicable to this article as no new data were created or analysed in this study.

\printbibliography
% \nocite{*}
% \nocite{ohsawa_discrete_2011,de_leon_hamilton-jacobi_2014,cresson2014continuous,cresson2015continuous,sato_martin_de_almagro_discrete_2020,ortega_dynamics_2004,birtea_asymptotic_2007,coquinot_general_2020}

\end{document}